\newtheorem{lemma}{Lemma}
\newtheorem{definition}{Definition}
\newtheorem{theorem}{Theorem}
\newcommand{\feed}[1]{\ensuremath{\mathsf{#1Binary}}}
\newcommand{\fullfeed}[1]{\ensuremath{\mathsf{#1Full}}}
\newcommand{\gfeed}{\mathsf{Result}\xspace}
\newcommand{\abfeed}{(\alpha,\beta)\text{-}\mathsf{Feed}\xspace}
\renewcommand{\Pr}[1]{\mathbb{P}\left[\,#1\,\right]}
\newcommand{\ie}{{\it i.e.,}\xspace}
\newcommand{\dk}[1]{{#1}}
\newcommand{\dpj}[1]{{#1}}
\newcommand{\gt}{Group Testing\xspace}
\newcommand{\cO}{O}
\newcommand{\ep}{\varepsilon}
\newcommand{\polylog}{{\rm \ polylog\ }}
\newcommand{\cT}{{\mathcal{T}}}
\newcommand{\cS}{{\mathcal{S}}}
\newcommand{\cR}{{\mathcal{R}}}
\newcommand{\cF}{{\mathcal{F}}}
\newcommand{\fieldF}{{\mathbb{F}}}
\newcommand{\symdiff}{\;\triangle \;}
\newcommand{\remove}[1]{}
\begin{document}

\title{Efficient Deterministic Quantitative Group Testing for Precise Information Retrieval\thanks{D. Pajak was supported by the National Science Centre, Poland—Grant Number 2019/33/B/ST6/02988.}}

	\author[1]{Dariusz R. Kowalski}
	\author[2]{Dominik Pajak}
	\affil[1]{School of Computer and Cyber Sciences, Augusta University, USA, dkowalski@augusta.edu}
	\affil[2]{Wrocław University of Science and Technology, Poland, dominik.pajak@pwr.edu.pl}
	\date{}
\maketitle

\begin{abstract}
 The Quantitative Group Testing (QGT) 
is about learning a (hidden) subset $K$ of some 
large domain $N$
using 
a
sequence of queries, where a 
result of
 a query provides 
 information about the size of the intersection of the query with the unknown subset $K$. 
Almost all previous work focused on randomized algorithms minimizing the number of queries; 
however, in case of large domains $N$, randomization
may result in
a significant deviation from the expected precision. Others assumed unlimited computational power (existential results) or adaptiveness of queries.
In this work we propose efficient 
{\em non-adaptive deterministic} QGT algorithms for constructing queries and deconstructing a hidden set $K$ from the results of the queries, without using randomization, adaptiveness or unlimited computational power. The efficiency is three-fold. First, in terms of almost-optimal number of queries -- we improve it by factor nearly $|K|$ comparing to previous constructive results.
Second, our algorithms 
construct the queries and reconstruct set $K$
in polynomial time. Third, they work for {\em any} hidden set $K$, as well as multi-sets, and even if the results of the queries are capped at $\sqrt{|K|}$. We also analyze how often elements occur in queries and its impact to parallelization and fault-tolerance of the query system.
\end{abstract}



\section{Introduction}

In the \gt 
field,
introduced by 
\cite{dorfman1943detection}, the goal is to identify, by asking queries, all elements of an unknown set $K$. All we initially know about set $K$ is that $|K| \leq k$, for some known parameter $k\le n$, and that it is a subset of some much larger set $N$ with $|N| = n$. The answer to a query $Q$ depends on the intersection between $K$ and $Q$ and equals to $\gfeed(K \cap Q)$, where $\gfeed$ is some 
\dpj{result function (also called feedback function in this paper).}
The sequence of queries is a correct solution to \gt if and only if for any  two different sets $K_1, K_2$ (satisfying some cardinality restriction), the sequence of answers for $K_1$ and $K_2$ is different. Note that this allows to uniquely identify the hidden set $K$ based on 
\dpj{the results of the queries}, though in some cases such decoding could be a hard computational problem. The objective is: 
for a given deterministic feedback function $\gfeed(\cdot)$,
to find a fixed sequence of queries that will identify any set $K$ and the length of this sequence, called the query complexity, will be shortest possible. In particular, we are interested in solutions that have query complexity logarithmic in $n$ and polynomial in $k$. 

The most popular classical 
variant, 
present in the literature,
considers function 
$\gfeed(\cdot)$
that
simply answers whether the intersection between $K$ and $Q$ is empty or not, c.f.,~\cite{duhwang}; 
it is also known under the name of {\em beeping}. 
Another popular 
\dpj{result} 
function
returns the intersection size; this variant has also been studied under the name of \emph{coin weighting}~\cite{Bshouty09, de2013searching} \dpj{and Quantitative Group Testing~\cite{Gebhard19, Feige20}}.
Those variants were applied in many domains, including pattern matching~\cite{clifford2010pattern, Indyk97}, compressed sensing~\cite{cormode2006combinatorial}, streaming algorithms~\cite{cormode2005s, cormode2008finding}, reconstructing graphs~\cite{choi2010optimal,GrebinskiK00}, \dpj{identifying genetic carriers~\cite{cao2014quantitative}}, 
resolving conflicts on multiple-access channels~\cite{capetanakis1979generalized,capetanakis1979tree,gallager1985perspective,greenberg1987estimating,greenberg1985lower,KomlosG85,massey1981collision}.

In this paper we 
study the problem of \gt under a more general \emph{capped quantitative result}, 
\dpj{where the result (feedback) is} 
the size of the intersection up to some parameter $\alpha$ and 
$\alpha$ for larger intersections. It subsumes and generalizes the two previously described classical 
\dpj{result functions}: the smallest possible case of $\alpha = 1$ corresponds to the classical empty/non-empty 
feedback (beeping), 
while the case $\alpha = k$ captures the (full) quantitative 
feedback.
For this generalized 
\dpj{result function} 
we study the influence of the parameter $\alpha$ on the optimal query complexity of \gt\ 
-- thus giving a formal explanation why different settings considered in the literature differ in terms of the query complexity, i.e., the optimal length of query sequence allowing to decode a hidden set~from~feedbacks. 

Our focus is on {\em non-adaptive} solutions, in which queries must be fixed and allow to discover any hidden set based on capped quantitative feedback. Such solutions could be seen as codes: each element corresponds to a binary codeword with $0$ or $1$ on $i$-th position indicating whether the element belongs to $i$-th query. 
Our main emphasis in on {\em minimization of code length} and
\emph{polynomial-time construction of queries/codewords allowing polynomial-time decoding of the hidden set}.
All existing polynomial construction/reconstruction algorithms for beeping or quantitative feedbacks produce codes of length $\Theta(\min\{n,k^2\log n\})$, c.f., \cite{PR11}.
General constructions developed in this work, when instantiated for a specific feedback parameter $\alpha\in [k]=\{1,\ldots,k\}$, shrink the gap
for efficiently constructed query systems even exponentially, and together with the lower bound -- explain why sometimes a much smaller feedback is sufficient
for decoding sets with similar efficiency.


\remove{
In this paper we consider a generalized version of \gt, with a larger emphasis on 
how its complexity depends on
the parameters of the feedback function. There is a large body of literature introducing new variants of \gt~\cite{Censor-HillelHL15,GrebinskiK00,Bshouty09,MarcoJKRS20,MarcoJK19}, which could be viewed as different feedback functions applied to the Group Testing framework. Therefore, in this paper we aim at designing a unifying framework for such generalizations and
study the dependence of the query complexity of this problem on two fundamental parameters of the feedback function: 
\begin{description}
\item[\emph{Capacity:}] this parameter denotes the maximum set size that can be processed by the feedback function. For any query such that the size of $Q \cap K$ exceeds the capacity, then the adversary selects (arbitrarily) a subset with as many elements as the capacity and the answer to the query $Q$ is the feedback of this set. This parameter will be denoted by $\alpha$ throughout this paper and varies between $1$ and $k$.
\item [\emph{Expressiveness:}] this parameter denotes the number of output bits of the feedback function; it will be denoted by $\beta$ throughout this paper, and varies between $1$ and $\bar{\alpha}$, where the latter denotes a binary logarithm of the number of all subsets of $N$ of size at most $\alpha$. 
\end{description}
A feedback function with capacity $\alpha$ and expressiveness $\beta$ is called an {\em $(\alpha,\beta)$-feedback}.
Clearly, increasing 
the capacity parameter $\alpha$ or expressiveness $\beta$ increases the number of $(\alpha,\beta)$-feedback functions,
and should decrease the query complexity of the best feedbacks in this family. 
But what is the asymptotic speed of this decrease? 
Are there better and worse feedback functions for given $\alpha,\beta$, i.e., resulting in smaller (resp., larger) query complexity?
This paper provides some partial answers to 
these questions.
%
\begin{table}
	\centering
	\begin{tabular}{llll}
		\toprule
		$\alpha$ & $\beta$ & Upper bound & Lower bound \\\midrule
		$1$ & $1$ & $O\left(k^2 \log\frac{n}{k}\right)$~\cite{BonisGV03} & $\Omega\left(k^2 \frac{\log n}{\log k} \right)$~\cite{ClementiMS01} \\\midrule
		$k$ & $1$ & $O(k \log \frac{n}{k})$~\cite{Censor-HillelHL15} & $\Omega(k \log \frac{n}{k})$~\cite{Censor-HillelHL15}  \\\midrule
		$k$ & $\log k$ & $O\left(k \frac{\log \frac{n}{k}}{\log k}\right)$~\cite{GrebinskiK00} &$\Omega\left(k \frac{\log \frac{n}{k}}{\log k}\right)$ ~\cite{djackov1975search, lindstrom1975determining}\\\midrule
		$*$  & $1$ & $O\left(\frac{k^2}{\alpha} \log\frac{n}{k} \right)$ Thm~\ref{thm:binary} &$\Omega\left(\frac{k^2}{\alpha^2} \log^{-1}k \right)$ Thm~\ref{thm:fulllower} \\\midrule
	$*$  & $\bar{\alpha}$ & $O\left(\frac{k^2}{\alpha^2} \log\frac{n}{k}  \right)$ Thm~\ref{thm:fullupper} & $\Omega\left(\frac{k^2}{\alpha^2} \log^{-1}k \right)$ Thm~\ref{thm:fulllower} \\\midrule
		$*$  & $*$ & $O\left(\frac{k^2}{\alpha\beta} \log^2 \frac{n}{k } \right)$ Thm~\ref{thm:generalupper} & $\Omega\left(\frac{k^2}{\alpha^2} \log^{-1}k \right)$ Thm~\ref{thm:fulllower}\\		\bottomrule
	\end{tabular}
	\caption{\label{tab1} Results on non-adaptive \gt with $(\alpha,\beta)$-feedback. The upper bound column states query complexity of the best found $(\alpha,\beta)$-feedback found for 
	parameters $\alpha,\beta$ fixed in the first two columns; as we will show, not all $(\alpha,\beta)$-feedbacks could reach that complexity. Symbol $*$ stands for any valid value of the parameter, and $\bar{\alpha}$ stands for a binary logarithm of the number of all subsets of $N$ of size at most $\alpha$.}
\end{table}
}


\subsection{Our results}
\label{results}

\begin{table*}
	\centering
	\begin{tabular}{llll}
		\toprule
		$\alpha$ & Constructive Upper Bound & Existential Upper Bound & Lower bound \\\midrule
		$1$ & $O(k^2 \log n)$~\cite{PR11} & $O\left(k^2 \log\frac{n}{k}\right)$~\cite{BonisGV03} & $\Omega\left(k^2 \frac{\log n}{\log k} \right)$~\cite{ClementiMS01} \\\midrule
		$k$ & \begin{tabular}{@{}c@{}}$O(k^2 \log n)$~\cite{PR11} \\ {\bf $\widetilde{O}(k)$ Thm~\ref{thm:constructive-upper}}\end{tabular}  & $O\left(k \frac{\log n}{\log k} \right)$\cite{GrebinskiK00} & $\Omega\left(k \frac{\log n}{\log k} \right)$ (folklore)  \\\midrule
		$*$  &  \begin{tabular}{@{}c@{}} $\widetilde{O}\left(\min\left\{\left(\frac{k}{\alpha}\right)^2,\frac{n}{\alpha} \right\} +k \right)$  \\ {\bf Thm~\ref{thm:constructive-upper}} \end{tabular}
		& \begin{tabular}{@{}c@{}} $O\left( \min\left\{n, \left(\frac{n}{\alpha} + k \right)\log n\right\}\right)$ \\ 
		{\bf if } $k > \sqrt{n \alpha}$,  {\bf Thm~\ref{thm:existential}} \end{tabular} & 
		\begin{tabular}{@{}c@{}} $\Omega \left(\min\left\{\left(\frac{k}{\alpha}\right)^2,\frac{n}{\alpha}\right\} + k\frac{\log n}{\log k} \right)$ \\ 
		{\bf Thm~\ref{thm:lower}}\end{tabular} \\		\bottomrule
	\end{tabular}
	\caption{\label{tab1}  Bounds on query complexity (codeword length) of solutions to non-adaptive \gt with $\cF_{\alpha}$ feedback. By constructive upper bound we mean constructive in time $poly(n)$. Symbol $*$ stands for any valid value of the parameter, notation $\widetilde{O}$ disregards polylogarithmic factors. Our existential upper bound in Theorem~\ref{thm:existential} only covers some range of parameters (it assumes $k > \sqrt{n\alpha}$).}
\end{table*}
We show the first efficient explicit polynomial-time construction of Group Testing query sequence, with an associated polynomial-time decoding algorithm of the hidden set, where the number of queries is only {\em polylogarithmically} far from the absolute lower bound. Previous best polynomial time construction and decoding algorithms, based on superimposed codes, used super-linearly larger query complexity (i.e., with super-quadratic length) than the lower bound. Thus, we shrink the length overhead exponentially, obtaining almost-optimal number of queries by using efficient polynomial-time coding and decoding algorithms.
We also generalize the 
\dpj{result function}
to \emph{capped 
\dpj{result}
 $F_\alpha$}, that returns the size of the intersection only if it is not bigger than some value $\alpha$. We extend our polynomial time algorithms to work for any value of $\alpha$. We also prove a lower bound showing that no other 
\dpj{result function} 
capped at $\alpha$, no matter how complex, could allow substantially less queries (up to a polylogarithmic factor).
One of the consequences of our results is that having quantitative 
\dpj{result function}
capped at $\alpha=\sqrt{k}$ we obtain similar number of queries as with the full 
\dpj{result}
(i.e., returning the size of the whole intersection, up to $k$), which is not a drawback of our method but, as indicated by our lower bound, the inherited property of Group Testing.
\dpj{
We show two applications of our results in streaming. Our first application is an algorithm that processes a stream of insertions and deletions of elements and reconstructs exactly the (multi) set provided that the total number of elements of the set does not exceed $k$. Our second application is an algorithm for maintaining and reconstructing a graph with dynamically added or removed edges.}

We generalize the classical beeping and quantitative feedbacks by defining an {\em $\alpha$-capped quantitative feedback} function, for any $\alpha\in [k]$: 
\[
\cF_{\alpha}(Q \cap K) = \min\{|Q \cap K|,\alpha\}
\ ,
\] 
and study the query complexity of non-adaptive \gt under this feedback, where the query complexity is the number of used queries
 or alternatively -- the length of codewords. We focus on polynomial-time constructing/decoding algorithms.
\paragraph{Main result -- Polynomial-time construction/decoding algorithm using almost optimal number of queries.} 
Here almost-optimality means that the length of the constructed query sequence is only polylogarithmically longer than the shortest possible sequence. The previous best polynomial-time solution used $\Theta(\min\{n,k^2\log n\})$ queries for all $\alpha\ge 1$
\cite{PR11}, 
and we shrink it by factor
$\Theta(\min\{\alpha^2,k\}\,\mbox{polylog}^{-1}\, n)$. 
To achieve this goal, we define and build new types of selectors, called (Strong) Selectors under Interference.
We also generalize the concept of Round-Robin query systems, where each query is a singleton,
to $\alpha$-Round-Robin query systems, 
containing sets of size at most $\alpha$. Such sequences are shorter than the simple Round-Robin, \ie have length $O((n/\alpha)\polylog n)$,
and, unlike a simple Round-Robin singletons' structure, are challenging to construct 
in a way to allow correct decoding based on $\alpha$-capped feedback.
\begin{theorem}
\label{thm:constructive-upper}
There is an explicit polynomial-time algorithm constructing non-adaptive queries $Q_1,\ldots,Q_m$, for $m=O\left(\min\left\{\left(\frac{k}{\alpha}\right)^2 \log^3 n, \frac{n}{\alpha} \polylog n\right\} + k \polylog n\right)$, that solve \gt under feedback 
$\cF_{\alpha}$ with polynomial-time decoding.
Moreover, every element occurs in $\cO(\frac{k}{\alpha}\log^2 n +\polylog n)$ queries, and the decoding time is
$\cO(m+\frac{k^2}{\alpha}\log^2 n +k\polylog n)$.
\end{theorem}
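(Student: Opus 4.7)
The plan is to prove the two branches of the minimum separately and then combine them. Underlying both is an iterative decoding scheme: whenever, for the still-unknown part of $K$, some query $Q$ has intersection strictly below $\alpha$, the feedback $\cF_{\alpha}(Q\cap K)$ reveals the intersection exactly, and comparing it to the contribution of the already-identified elements lets the decoder accept or reject further candidates. All upper-bound constructions will be designed so that, at every point during decoding, at least one such ``non-saturated'' witness exists for each element of $K$ that remains to be identified.

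For the selector branch, achieving $O((k/\alpha)^2\log^3 n)$, I would introduce an \emph{$\alpha$-Strong Selector under Interference}: a family $\cS=\{S_1,\ldots,S_m\}$ of subsets of $N$ such that for every $K\subseteq N$ with $|K|\le k$ and every $x\in N$, some $S_i$ satisfies $x\in S_i$ and $|S_i\cap K|\le\alpha$. A random construction placing each element in each set independently with probability $\Theta(\alpha/k)$ gives $m=O((k/\alpha)^2\log n)$ by a Chernoff plus union-bound argument over the $\binom{n}{k}\cdot n$ many $(K,x)$ pairs. The technical step is derandomization: I would follow a Porat--Rothschild style construction, encoding each element as an evaluation of a degree-$O(\log n)$ polynomial over a finite field of size $\Theta((k/\alpha)\log n)$, producing length $O((k/\alpha)^2\log^3 n)$ and element-occurrence degree $O((k/\alpha)\log^2 n)$, which also supplies the ``every element in $O((k/\alpha)\log^2 n)$ queries'' part of the statement.

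For the $\alpha$-Round-Robin branch, achieving $O((n/\alpha)\polylog n)$, the starting point is to partition $N$ into blocks of size $\Theta(\alpha)$. A single query per block uses $n/\alpha$ queries and reveals the number of $K$-elements in each block but not their identities. To lift counts to identities I would superimpose a second family of $\polylog n$ queries generated from a short identifying code (e.g.\ a concatenated Reed--Solomon code) assigning each element a codeword; the queries inside a block are designed so that, combined with the block count, the iterative decoder above recovers the members. The algebraic structure of the code is chosen precisely so that, block by block, some superimposed query always has intersection below $\alpha$ with the residual unknown part of~$K$.

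Finally, an outer $O(k\polylog n)$ ``disambiguation'' layer, consisting of a standard $(k,n)$-superimposed code, is appended to guarantee at least one query with intersection exactly one with $K$ for every element of $K$; this covers boundary cases and produces the additive $k\polylog n$ term. The decoder does an $O(m)$-time initial scan, followed by at most $k$ identification rounds each touching one element-occurrence list of length $O((k/\alpha)\log^2 n)$, plus $O(k\polylog n)$ for processing the disambiguation layer, which matches the stated $O(m+(k^2/\alpha)\log^2 n+k\polylog n)$ running time. The main obstacle will be the explicit construction of the $\alpha$-Strong Selector under Interference: the random existence argument is routine, but pushing it to a polynomial-time code with only polylogarithmic overhead, and endowed with enough structure that interference-corrected iterative decoding provably terminates correctly in the claimed number of rounds, is where the bulk of the work lies.
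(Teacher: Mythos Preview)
Your selector notion is too weak to drive decoding under $\cF_\alpha$, and this is the central gap. You require only that some $S_i$ contain $x$ with $|S_i\cap K|\le\alpha$. Under $\cF_\alpha$ the feedback is a \emph{count}, so knowing $|S_i\cap K|$ exactly and subtracting $|S_i\cap K_{acc}|$ tells you how many unknown elements of $K$ lie in $S_i$, but never \emph{which} ones. Your ``accept or reject further candidates'' step has no mechanism to produce a candidate in the first place, and even given a candidate $x$ there is no way to test $x\in K$ from a count alone unless $x$ is the \emph{only} unknown element of $K$ in $S_i$. The paper closes exactly this gap with two coupled ideas you are missing: (i) the selector must isolate a \emph{singleton} from the unknown part $K_1=K\setminus K_{acc}$ while keeping interference from the known part $K_2=K_{acc}$ below $\alpha-1$ (this is the two-set SuI/SSuI definition, not your one-set version); and (ii) every selector set $S$ is augmented by $2\log n$ companion queries $R_j(S)$ indexed by the bits of a Balanced ID, so that when $|S\cap K_1|=1$ the bitwise differences $\cF_\alpha(R_j(S)\cap K)-|R_j(S)\cap K_{acc}|$ spell out the identifier of the unique unknown element. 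Without something playing the role of (ii), count feedback cannot be turned into identities.

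There is also a structural mismatch in how the three length terms arise. In the paper the additive $k\polylog n$ comes from a telescoping cascade of SuI's for $\ell=k,k/2,\ldots,\Theta(k/\alpha)$, each built from a disperser crossed with a small strong selector (this is where the halving-per-phase decoding invariant lives); the $(k/\alpha)^2\log^3 n$ term comes from a single Kautz--Singleton-style SSuI appended at the end to finish off the last $\Theta(k/\alpha)$ elements; and the $(n/\alpha)\polylog n$ branch comes from splitting the SuI sets into pieces of size at most $\alpha$. Your ``outer disambiguation layer consisting of a standard $(k,n)$-superimposed code'' cannot have length $O(k\polylog n)$: such codes need $\Omega(k^2\log n/\log k)$ sets, so that component as written already blows the budget. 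Your Porat--Rothschild/polynomial idea is in the right spirit for the SSuI piece (and indeed the paper uses a Reed--Solomon construction there), but it must be proved against the stronger two-set interference property, not the one you stated.
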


In Section~\ref{sec:applications},
we describe several non-straightforward applications of this result to dynamic graph maintenance and Group Testing on multi-sets; other potential applications include more efficient algorithms for finding hot elements in online streaming~\cite{cormode2006combinatorial} or for wireless communication~\cite{greenberg1987estimating}. %


\paragraph{Lower bound.}  
The almost-optimality of our algorithms from Theorem~\ref{thm:constructive-upper} is justified by proving an absolute lower bound on the length of sequences 
allowing to decode
a hidden set based on feedback $\cF_\alpha$ to the queries.
Here by ``absolute'' we mean that it holds for all query systems that allow for decoding of the hidden sets based on feedback $\cF_\alpha$,
not restricted to polynomially constructed queries with polynomial decoding algorithm. 
Even more, some components of the lower bound are general: they hold for {\em any $\alpha$-capped} feedback function, which will be formally defined later in Section~\ref{sec:model}.
%
The lower bound has three components: $(k/\alpha)^2$, $n/\alpha$, and $k\frac{\log\frac{n}{k}}{\log \alpha}$. For different ranges of $k$, different components determine the value of the lower bound. 
Note that these components match the corresponding components in our constructive upper bound (Main result in Theorem~\ref{thm:constructive-upper}), 
up to polylogarithmic factor.
%
\begin{theorem}
\label{thm:lower}
Any non-adaptive algorithm solving \gt needs:
\begin{itemize}
\item 
$\Omega\left(\min\left\{\left(\frac{k}{\alpha}\right)^2, \frac{n}{\alpha} \right\} + k\frac{\log\frac{n}{k}}{\log \alpha}\right)$ queries under feedback $\cF_{\alpha}$.
\item $\Omega\left(\min\left\{\left(\frac{k}{\alpha}\right)^2, \frac{n}{\alpha} \right\}\right)$ queries under any feedback 
capped~at~$\alpha$.
\end{itemize}
\end{theorem}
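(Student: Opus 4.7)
The plan is to prove the two bounds separately. The information-theoretic bound $\Omega(k\log(n/k)/\log\alpha)$, which appears only in the $\cF_\alpha$ statement, will follow from a short counting argument: under $\cF_{\alpha}$ each query produces one of at most $\alpha+1$ values, so the full $m$-query feedback sequence lies in a universe of size $(\alpha+1)^m$. To distinguish all $\binom{n}{k}\ge (n/k)^k$ hidden sets of size exactly $k$, we need $(\alpha+1)^m\ge (n/k)^k$, which rearranges to $m=\Omega(k\log(n/k)/\log(\alpha+1))$.

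For the combinatorial bound $\Omega(\min\{(k/\alpha)^2, n/\alpha\})$, which must hold for every $\alpha$-capped feedback, I will rely only on the defining property of $\alpha$-capped feedbacks (formalized in Section~\ref{sec:model}): any such $f$ satisfies $f(Q,K_1)=f(Q,K_2)$ whenever $Q\cap K_1=Q\cap K_2$, and whenever both $|Q\cap K_1|,|Q\cap K_2|>\alpha$. Call two distinct subsets $K_1,K_2$ \emph{$\alpha$-indistinguishable} if for every query $Q_i$ one of these two conditions holds; it suffices to show that any sequence of $m<c\min\{(k/\alpha)^2,n/\alpha\}$ queries admits an $\alpha$-indistinguishable pair with $|K_j|\le k$. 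I will look for pairs of the form $K_j=K_0\cup\{e_j\}$, where $K_0$ is a shared ``shadow'' of size $k-1$ and $e_1\ne e_2$ lie outside $K_0$; a short case analysis shows that such a pair is $\alpha$-indistinguishable whenever every query containing $e_1$ or $e_2$ satisfies $|Q\cap K_0|\ge \alpha+1$.

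For the $n/\alpha$ term, I assume $m<n/(4\alpha)$ and truncate each query $Q_i$ to an arbitrary subset of size $\min(|Q_i|,2\alpha)$. The total truncated mass is below $n/2$, so at least half the elements of $N$ lie outside every ``small'' query (those of size $\le 2\alpha$); picking $e_1,e_2$ from this outside set forces every query containing $e_1$ or $e_2$ to have size $>2\alpha$, and a standard averaging/concentration argument then produces a shadow $K_0$ of size $k-1$ hitting each such query in at least $\alpha+1$ points. For the $(k/\alpha)^2$ term, I will adapt the classical cover-free family lower bound for beeping group testing to the $\alpha$-capped setting: any query matrix supporting decoding must preclude the bad configuration $(K_0,e_1,e_2)$ described above for \emph{every} choice of triple, and a double-counting argument over such triples yields $m=\Omega((k/\alpha)^2)$.

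The hardest step will be the $(k/\alpha)^2$ bound. Since queries can have arbitrary sizes, arranging $|Q\cap K_0|\ge\alpha+1$ on every query containing $e_1$ or $e_2$ simultaneously requires balancing two regimes --- small queries, which can violate the overflow condition regardless of $K_0$, and large queries, which a random $K_0$ handles easily by Chernoff. The argument must first pick $e_1,e_2$ to avoid all small queries via a pigeonhole on $\sum_i\min(|Q_i|,\alpha+1)$, then pick $K_0$ probabilistically to dominate the remaining large queries; ensuring that the two regimes patch together to give the quadratic threshold $m=\Omega((k/\alpha)^2)$ (rather than only the linear $m=\Omega(k/\alpha)$ that a naive random argument yields) is the delicate point.
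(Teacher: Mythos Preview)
Your information-theoretic part is fine and matches the paper. The combinatorial part, however, has a real gap, and the paper's argument is structured quite differently.

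\textbf{Where your plan breaks.} In your $n/\alpha$ argument you assume $m<n/(4\alpha)$, locate $e_1,e_2$ outside all queries of size $\le 2\alpha$, and then assert that ``a standard averaging/concentration argument'' yields $K_0$ of size $k-1$ with $|Q\cap K_0|\ge \alpha+1$ for every large query containing $e_1$ or $e_2$. This step does not go through. Take queries of size just above $2\alpha$: a uniformly random $K_0$ has expected intersection only $\Theta(\alpha k/n)$ with such a query, which can be far below $\alpha$ (e.g.\ when $k\approx\sqrt{n\alpha}$, exactly the regime where $n/\alpha$ is the binding term). A greedy $K_0$ that packs $\alpha{+}1$ elements per relevant query fares no better: even after choosing $e_1,e_2$ of minimum degree among the free elements, each can lie in up to $\Theta(m)$ queries, so greediness only forces $m=\Omega(k/\alpha)$, hence $\Omega(\sqrt{n/\alpha})$ in this regime---quadratically weaker than claimed. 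The same obstruction is exactly the ``delicate point'' you flag for the $(k/\alpha)^2$ term; you identify the difficulty but do not resolve it.

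\textbf{What the paper does instead.} The paper never tries to build a single witnessing triple $(K_0,e_1,e_2)$. It first proves a local lemma (Claim~A): for \emph{every} $K$ of size $\le k$ and every $x\in K$, some query $Q$ must have $x\in Q$ and $|Q\cap K|\le \alpha+1$, else $K$ and $K\setminus\{x\}$ are indistinguishable. Then it case-splits on the query structure. If at least $n/2$ elements lie in some query of size $\le \alpha+1$, pigeonhole gives $m\ge n/(2(\alpha+1))$ immediately. Otherwise, fix any $K_1$ of $k/2$ elements that avoid all small queries, and for each $x\in K_1$ let $Q(x)$ be the set of queries containing $x$ with $|Q\cap K_1|\le \alpha+1$. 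Claim~B shows $|Q(x)|\ge k/(2(\alpha+2))$: if not, one can pad $K_1$ with at most $k/2$ extra elements (chosen from the at most $|Q(x)|$ offending queries, each of size $>\alpha+1$) so that $x$ violates Claim~A. Finally, since each query lies in at most $\alpha+1$ of the sets $Q(x)$, double counting gives
\[
m \;\ge\; \frac{1}{\alpha+1}\sum_{x\in K_1}|Q(x)| \;\ge\; \frac{k^2}{4(\alpha+1)(\alpha+2)}.
\]
The minimum of the two cases yields the bound. The key idea you are missing is to fix the \emph{half-set} $K_1$ first and double-count good queries over all $x\in K_1$; the jamming step (Claim~B) is what converts the linear $k/\alpha$ budget into a quadratic $(k/\alpha)^2$ lower bound.
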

Component $k\frac{\log\frac{n}{k}}{\log \alpha}$ in the lower bound follows from a standard information-theoretic argument. To show the remaining parts we use the following idea. Any algorithm working for any set $K$ must ensure that each element $x \in K$ belongs to at least one query $Q$ with small intersection, $|K \cap Q| \leq \alpha$. Otherwise, the element $x$ could be \emph{jammed} by other elements and algorithm would not ``notice'' if it was removed from $K$. With this observation, we obtain that either the algorithm decides to place most of the elements in small queries (\ie of size at most $\alpha$) or each element has to belong to many queries. In the first case we easily obtain 
the necessity of $\Omega(n/\alpha)$ queries. In the second case, we pick set $K_1$ with $k/2$ elements not belonging to any small query and observe that each of the queries to which these elements belong, can be jammed 
by careful selection of $k/2$ elements in $K\setminus K_1$.
This means that for each element $v\in K_1$ we must have at least $\Theta(k/\alpha)$ queries 
containing $v$ and
intersecting $K_1$ on at most $\alpha$ elements, as otherwise element $v$ could be jammed by the remaining $k/2$ elements (in $K\setminus K_1$) by simply choosing $\alpha$ elements from each query to which $v$ belongs. Using this observation we can apply the following counting argument: for each of $k/2$ elements in $K_1$ we have selected $\Theta(k/\alpha)$ queries and each of these queries can be selected at most $\alpha$ times (as otherwise it would be jammed already by set $K_1$). It implies
$\Omega(k^2 / \alpha^2)$ lower bound in this case.


\paragraph{Existential result.}  

Our second upper bound shows the existence of a sequence of queries that provides unique feedback for any set of at most $k$ elements. It is a non-constructive version of $\alpha$-Round-Robin (see our Main result in Theorem~\ref{thm:constructive-upper} for $k \in [\sqrt{n\alpha}, n/2]$), but it has a smaller polylogarithmic factor. 
Its proof is given in Section~\ref{s:upper}. 


\begin{theorem}
\label{thm:existential}
There exists a sequence of queries of length $O\left(\min\{n, \left(k + n/\alpha \right)\log n\}\right)$ solving \gt under feedback $\cF_\alpha$.
\end{theorem}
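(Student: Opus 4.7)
The plan is to establish the bound by a probabilistic construction combined with a union bound over all pairs of distinct sets $K_1 \neq K_2$ with $|K_i| \le k$. The shape $(k + n/\alpha) \log n$ naturally suggests splitting the queries into two cooperating families that cover different regimes of the symmetric difference $d = |K_1 \triangle K_2|$.

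For the construction I would draw two families independently. Family~A is an ``$\alpha$-Round-Robin skeleton'': sample $R = O(\log n)$ independent uniformly random partitions of $N$ into $n/\alpha$ blocks of size exactly $\alpha$, and use each block as a query. These $O((n/\alpha) \log n)$ queries are never capped since $|Q| \le \alpha$, so their feedback equals $|Q \cap K|$ exactly. Family~B consists of $O(k \log n)$ ``dense'' random queries in which every element is included independently with probability $p = \Theta(\min(1, \alpha/k))$; the choice of $p$ makes $\E{|Q \cap K_i|} = O(\alpha)$, so Markov yields $\Pr{|Q \cap K_i| < \alpha} = \Omega(1)$ for any $|K_i| \le k$.

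The analysis fixes a pair $K_1 \neq K_2$ and bounds the probability that no query distinguishes them. If $|K_1| \neq |K_2|$, then any single partition in Family~A already distinguishes the pair, since the block-feedbacks sum to $|K_i|$. Assume $|K_1| = |K_2|$, so $d \ge 2$ and $A = K_1 \setminus K_2$, $B = K_2 \setminus K_1$ have equal size. For a random partition in Family~A the probability of failing to distinguish -- i.e., of $|G \cap A| = |G \cap B|$ for every block $G$ -- is bounded by a combinatorial matching estimate that is at most $O(\alpha/n)$ when $d = 2$ and decays further with $d$; with $R = O(\log n)$ independent partitions the combined failure probability is polynomially small in $n$. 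Family~B, on the constant-probability event that neither intersection is capped, distinguishes the pair whenever $|Q \cap A| \neq |Q \cap B|$, which has probability $\Omega(1)$ by anti-concentration of the difference of two independent binomials once $d$ is at least a constant. Choosing the leading constants so that each pair fails with probability at most $n^{-3k}$, a union bound over the at most $n^{2k}$ pairs succeeds with probability $\ge 1 - n^{-k} > 0$, proving existence.

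The main obstacle is the small-$d$ regime combined with ``balanced'' pairs -- specifically, same-cardinality pairs whose differing elements conspire to preserve per-block counts under many random partitions (for instance, two-for-two swaps within a single block of the form $\{a,b\}$ vs.\ $\{c,d\}$ inside one group). Handling these relies on the $\log n$-fold amplification in Family~A together with a careful combinatorial accounting of the possible matchings between $A$ and $B$ that survive a random partition; the remainder is standard bookkeeping inside the union bound, and the $\min\{n, \cdot\}$ in the statement is handled by the trivial singleton construction when the random bound exceeds $n$.
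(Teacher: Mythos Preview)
Your two-family split matches the paper's own proof in spirit, but there are two concrete gaps in the analysis you sketch.

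First, the anti-concentration claim for Family~B is wrong as stated. With $p=\Theta(\alpha/k)$ and $|A|=|B|=d/2$, the difference $|Q\cap A|-|Q\cap B|$ is the difference of two independent $\mathrm{Bin}(d/2,p)$ variables; when $dp\ll 1$ both are~$0$ with probability $\approx 1-dp$, so $\Pr{D\ne 0}=\Theta(dp)=\Theta(d\alpha/k)$, not $\Omega(1)$. The regime where Family~B gives $\Omega(1)$ success per query is $d=\Omega(k/\alpha)$, not ``$d\ge$ a constant''. For $d$ in the range $[\text{const},\,k/\alpha)$ your per-query success is only $\Theta(d\alpha/k)$ and the resulting per-pair failure is $n^{-\Theta(d\alpha)}$, nowhere near the $n^{-3k}$ you need.

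Second, the claim that ``each pair fails with probability at most $n^{-3k}$'' is unachievable for small $d$ with the resources you allocate: Family~A with $R=O(\log n)$ partitions gives failure $((\alpha-1)/(n-1))^{R}=n^{-O(1)}$ for $d=2$, not $n^{-\Omega(k)}$. A single union bound over all $\le n^{2k}$ pairs therefore cannot close. The fix is to split the union bound by~$d$: for $d<k/\alpha$ use only Family~A and union-bound over the $\binom{n}{d}$ \emph{symmetric differences} (Family~A's queries have size~$\alpha$, so its failure event depends only on $K_1\symdiff K_2$, and your matching estimate $q(d)\le (d/2)!\,(\alpha/n)^{d/2}$ then suffices with $R=O(\log n)$). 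For $d\ge k/\alpha$ use Family~B and union-bound over pairs; now $dp=\Omega(1)$ and $O(k\log n)$ queries give failure $n^{-\Omega(k)}$, which beats $n^{2k}$ with a large enough leading constant.

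The paper sidesteps both complications by making \emph{both} families i.i.d.\ Bernoulli (rates $\alpha/(6n)$ and $\min\{1/(6k),\alpha/(6n)\}$ respectively) and first proving via Chernoff that \emph{every} query has size $\le\alpha$. Once all queries are uncapped, distinguishability of $(K_1,K_2)$ depends only on $K=K_1\symdiff K_2$, and the paper proves the clean uniform property that for every such $K$ some query has $|Q\cap K|=1$. Because all $t=\Theta((n/\alpha+k)\log n)$ queries are independent---no partition structure---the per-$K$ failure is $e^{-\Omega(|K|\log n)}$, which beats $\binom{n}{|K|}$ for every size~$|K|$ in one stroke, with no pair-versus-symmetric-difference bookkeeping and no case split on~$d$.
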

This upper bound is shown using the probabilistic method by derandomizing three claims that the query sequence has to satisfy simultaneously. The first claim is that each query is small (\ie has at most $\alpha$ elements) and the other two claims ensure that some intersection between queries and a hidden set $K$ will be of size $1$ for two different regimes of parameter~$k$.
\remove{
We observed that many previously considered variants of \gt could be expressed by, and the query
complexity depend on, specific parameters of the feedback to the queries.
Therefore, we formally introduce families of $(\alpha,\beta)$-feedbacks, where $\alpha$ is the feedback capacity
while $\beta$ is its expressiveness. We study query complexity of the whole classes of $(\alpha,\beta)$-feedbacks,
depending on parameters $n,k,\alpha,\beta$, as well as several interesting sub-classes.
%
\begin{definition}
\label{def:feedbacks}
We define the following three feedback functions:
\begin{enumerate}
\item $\feed{\alpha}(X) = (|X|\text{ mod } 2)$. It is an $(\alpha,1)$-feedback, function because the returned value can be encoded on one bit.  
\item $\fullfeed{\alpha}(X) = X$. It is an $(\alpha,\bar{\alpha})$-feedback, as any subset of $N$ of at most $\alpha$ elements can be encoded by $\bar{\alpha}$~bits.
\item $\abfeed(X) = \left(|X| \text{ mod } 2\right) \bigparallel \left(\bigoplus_{x \in X} \mathsf{BCC}(x)\right),$ where $\mathsf{BCC}(x)$ is an  $\left[n,\beta- 1, \left\lfloor \frac{\beta - 1}{c \log \frac{n}{k}}\right\rfloor\right]$-BCC code of element $x$ and $c$ is a constant from~\cite[Lemma 2]{Censor-HillelHL15}, $\bigoplus$ denotes bitwise XOR operation and $\bigparallel$ denotes concatenation of vectors. It is an $(\alpha,\beta)$-feedback, because $BCC$ code uses $\beta-1$ bits and the remaining bit denotes the parity of $|X|$.
\end{enumerate}
\end{definition}
%
\paragraph{Binary feedback.} First, we consider feedbacks with minimum possible expressiveness, namely, 
returning
only one bit of information. In this setting we have to answer the question of \emph{What is the most useful bit of information about a set of elements?} It turns out that a parity bit allows us to obtain an efficient solution in the family of $(\alpha,1)$-feedbacks. 
\begin{theorem}
\label{thm:binary}
Under $\feed{\alpha}$ feedback, if $k > \alpha > \log k$, there exists a solution to \gt with query complexity $O\left(\frac{k^2}{\alpha} \cdot \log \frac{n}{k}\right)$.
\end{theorem}
The proof is based on derandomization of random queries drawn from different random distribution, after proving
that these queries satisfy a certain Separation Property.

\paragraph{Full feedback.} 
Our second result is in the setting with
the 
maximum possible expressiveness $\beta=\bar{\alpha}=\Theta(\alpha\log(n/\alpha))$, i.e., sufficient to return all identifiers of any set of size at most $\alpha$. 
We show that larger expressiveness allows to design algorithms with small query complexity: 
\begin{theorem}
\label{thm:fullupper}
Under $\fullfeed{\alpha}$ feedback, if $\sqrt{n} > k > \alpha > 18 \log k$, there exists a solution to \gt with query complexity $O\left(\frac{k^2}{\alpha^2} \cdot \log\frac{n}{k}\right)$.
\end{theorem}
The proof is via derandomization of a random sequence of queries $\mathcal{Q}$, from which we require 
to simultaneously satisfy two conditions: on the number of queries containing a specific element, and on the sizes of the intersections of queries from any subset of $\mathcal{Q}$ of certain size and any possible instantiation of set $K$.
\paragraph{General feedback.} 
After considering both extreme values of $\beta$ we study the general case, where a feedback needs to work for an arbitrary $1\le \beta\le \bar{\alpha}$. In this case our first contribution is a design of  a more sophisticated general feedback function $\abfeed{\alpha,\beta}$, 
c.f., Definition~\ref{def:feedbacks}, which works for almost any $\alpha, \beta$. Under this feedback we obtain the main result of the paper:
\begin{theorem}
\label{thm:generalupper}
Under $\abfeed{\alpha,\beta}$ feedback, if $\alpha > 18 \log k$ and $\alpha > \beta / \log \frac{n}{k}$, then there exists a solution to \gt with query complexity $O\left(\frac{k^2}{\alpha\beta} \cdot \log^2 \frac{n}{k}\right)$.
\end{theorem}
Our main result shows that the query complexity decreases linearly with $\alpha$ and with $\beta$. Intuitively factor $\frac{k}{\alpha}$ in our complexity comes from \emph{congestion}, since the feedback function has capacity to serve at most $\alpha$ elements out of $k$ in a single query. The second factor $\frac{k \log\frac{n}{k}}{\beta} \approx \frac{\log {n \choose k}}{\beta}$ comes from the information-theoretic bound that we need $\log_2 {n \choose k}$ bits to uniquely encode any subset of $k$ elements and the fact that the feedback function provides only $\beta$ bits per round. 
What is surprising and challenging to prove is that the query complexity of efficient (but not all!) $(\alpha,\beta)$-feedbacks is (close to) a multiplication of these two characteristics.

The proof combines ideas from the analysis of the binary feedback and full feedback. In the binary feedback case we observe that sets that differ on many elements can be distinguished quickly using the parity feedback. On the other hand, sets that differ only on few elements are handled using a combination of full feedback algorithm with a specific coding to encapsulate the feedback into $\beta$ bits.
\paragraph{Lower bound.}
We show a lower bound that proves that our upper bound shown in Theorem~\ref{thm:fullupper} is optimal up to polylogarithmic factor, for any $\alpha$.
\begin{theorem}
\label{thm:fulllower}
If $n > k^2\log n/\log k$, then any solution to \gt under any $(\alpha,\beta)$-feedback has query complexity 
$\Omega\left(\frac{k^2}{\alpha^2} \log^{-1} k\right)$.
\end{theorem}

\paragraph{Minimum Elements feedbacks.} Our two final results show that designing an efficient feedback function is very subtle. We show that a reasonable $(\alpha,2\log n)$-feedback function that returns two minimal elements from the set leads to very large query complexity of $\Omega(\min\{n,k^2\})$ if we restrict the function to return the elements \emph{in fixed order}, c.f., Theorem~\ref{thm:lower-2min}. Without this restriction it is possible to obtain feedback function for which there exists an algorithm with query complexity $O\left(\frac{k^2}{\alpha} \cdot \log \frac{n}{k}\right)$, 
c.f., Corollary~\ref{cor:upper-2min}.



 }

\paragraph{Document structure.}
We start from discussing related work on various variants of \gt
in Section~\ref{sec:related-future}. In Section~\ref{sec:model} we formally define the \gt problem and the generalized $\alpha$-capped feedback model. In Section~\ref{sec:poly} we show our polynomial-time construction of an algorithm solving \gt and prove that decoding can be done also in polynomial time. In Section~\ref{sec:applications} we show applications of our constructions to multisets and graph reconstruction. We prove the properties of our new selector tools in section~\ref{sec:tools}. In Section~\ref{s:lower} we show the lower bound, while in 
Section~\ref{s:upper} 
we show the proof of the existential upper bound. 
Discussion of results from perspective of future directions is given in Section~\ref{sec:future}.

%
\section{Previous and related work}
\label{sec:related-future}

 In the standard feedback model, considered in most of the \gt literature~\cite{duhwang}, the feedback tells whether the intersection between query $Q$ and set $K$ is empty or not  (sometimes it is also called a {\em beeping model}). It it is a special case $\cF_{1}$ of our feedback function. In this feedback model, \gt is known to be solvable using $O(k^2\log(n/k))$~\cite{BonisGV03} queries and 
 an explicit polynomial-time construction of length 
 $O(k^2 \log n)$~\cite{PR11}
 exists. Best known lower bound (for $k < \sqrt{n}$) is $\Omega(k^2 \log n/ \log k )$~\cite{ClementiMS01}. 


The 
setting
considered in this paper is also a generalization of an existing problem of coin weighting. In the coin weighting problem, we have a set of $n$ coins of two distinct weights $w_0$ (true coin) and $w_1$ (counterfeit coin), out of which up to $k$ are counterfeit ones. We are allowed to weigh any subset of coins on a spring scale, hence we can deduce the number of counterfeit coins in each weighting. The task is to identify all the counterfeit coins. Such a feedback is a special case $\cF_{k}$ of our feedback function. The problem is solvable with $O(k\log (n/k)/ \log k)$~\cite{GrebinskiK00} non-adaptive (i.e., fixed in advance) queries and matching a standard information-theoretic lower bound of $\Omega(k\log (n/k)/ \log k)$, as well as its stronger version proved for randomized strategies~\cite{de2013searching}. In~\cite{Bshouty09} the author considers the problem of explicit polynomial-time construction of $O(k\log (n/k)/ \log k)$ queries that allows for polynomial time identification of the counterfeit coins. However, the algorithm presented in~\cite{Bshouty09} is adaptive, which means that the subsequent queries can depend on the feedback from the previous ones. The only existing, constructive, non-adaptive solution would be using the explicit construction of the superimposed codes~\cite{kautz1964nonrandom} but the resulting query complexity would be $O(k^2 \polylog n)$. Thus, the solution presented in our paper is the first explicit polynomial time algorithm constructing non-adaptive queries allowing for fast decoding of set $K$, with $O(k\polylog n)$ 
fixed queries.
 
 \dpj{Our algorithms have direct application in stream processing and set reconstruction. There is a number of existing algorithms retrieving various informations from streams, such as: extracting the most frequent elements~\cite{cormode2003finding, cormode2005s, cormode2008finding,yu2004false}, quantile tracking~\cite{cormode2005holistic, gilbert2002summarize, greenwald2001space}, or approximate histogram maintenance and reconstruction~\cite{gibbons2002fast, gilbert2002fast}. Some of these existing algorithms use Group Testing (e.g.,~\cite{cormode2005s}) but in a randomized variant. This leads to a small probability of error that might become significant in very large streams. }
 
\dpj{
An important line of work on non-adaptive {\em randomized} solutions to Quantitative Group Testing~\cite{Gebhard19, Coja-OghlanGHL20, Feige20, bay2020optimal} resulted in a number of algorithms nearing the lower bound of $2k \frac{\ln(n/k)}{\ln k}$~\cite{djackov1975search}. However, these results always assume some restriction on $k$ (typically $k \sim n^{\theta}$ for some $0<\theta<1$), 
and similarly as above, they may result in significant deviation from the actual set if $n$ is large.
}

The bounds obtained in this paper match (up to polylogarithmic factors) the best existing results for the extreme cases of $\alpha =1$ and $\alpha = k$. The paper also bounds how the query complexity depends on the value of $\alpha$ between these extremes. Interestingly we show, that the shortest-possible query complexity of $k \polylog n$ is already possible for $\alpha = \sqrt{k}$ and increasing $\alpha$ from $\sqrt{k}$ to $k$ does not result in further decrease of the query complexity. 
 

The problem of \gt has also been considered in various different feedback models. For instance,~\cite{Censor-HillelHL15} shows that $O(k \log \frac{n}{k})$ queries are sufficient for a feedback that only returns whether the size of the intersection $|Q\cap K|$ is odd or even. Other interesting feedback function 
is a
\emph{Threshold Group Testing}~\cite{DAMASCHKE}, where the feedback model includes a set of thresholds and the feedback function returns whether or not the size of the intersection is larger or smaller than each threshold. In~\cite{MarcoJK19} the authors show that it is possible to define an interval of $\sqrt{k \log k}$ thresholds resulting in an algorithm with $O(k \log(n / k) / \log k)$ queries.
Note that both those feedbacks are ``inefficient'' in view of our setting of $\alpha$-capped feedbacks, because their feedback functions are not capped at any $\alpha<k$, but they achieve similar query complexity as our capped $\cF_{\sqrt{k}}$ feedback.

Our construction are using known combinatorial tools such as superimposed codes and dispersers. They were used before in Group Testing~\cite{IndykNR10}. However, either it led to a super-quadratic (in $k$) number of queries~\cite{kautz1964nonrandom, cheraghchi2019simple} or decoded only a fraction of elements of the hidden set~\cite{Indyk02}. In solutions, where query complexity depends on the number of identified elements, decoding of all the elements requires over $k^2$ queries~\cite{BonisGV03, ChlebusK05}. In~\cite{IndykNR10}, the authors present the first Group Testing solution with $poly(k, \log n)$ decoding time, but super-quadratic query complexity. It is worth noting that our generalized solution achieves almost-linear number of queries (for certain values of $\alpha$) and our decoding algorithm identifies all the elements in time polynomial in $k$ and logarithmic in $n$. Our use of the known tools is different then in previous approaches: we define new properties (SuI, SSuI), which we prove to be satisfied by some combinations of those tools, and lead to efficient solutions in both query complexity and construction/decoding time.
\section{The model and the problem}
\label{sec:model}
We assume that the universe of all elements $N$, with $|N| = n$, is enumerated with integers $1,2,\dots,n$. Throughout the paper, we will associate an element with its identifier. Let $K$, with $|K| \leq k$, denote a hidden set 
chosen arbitrarily by an adversary. 
Let $\mathcal{Q} = \langle Q_1,\ldots,Q_m \rangle$ be a non-adaptive algorithm, represented by a sequence of queries fixed prior to an execution.

Consider 
feedback function $\cF_\alpha$ that returns the size of an intersection if it is at most $\alpha$ and $\alpha$ for larger intersections (\ie $\cF_\alpha(Q\cap K) = \min\{|Q\cap K|,\alpha\}$). Parameter $\alpha$ in feedback $\cF_\alpha$ is called a {\em feedback cap}. A general class of feedback functions (used in our lower bound) with feedback cap $\alpha$ includes all deterministic functions that take subsets of $[N]$ as input and for sets with more than $\alpha$ output some, arbitrary fixed value.


We will say that $\mathcal{Q}$ solves \gt, if the feedback vector allows for unique identification of set $K$. The feedback vector is defined as:
\[
\langle \cF_\alpha(Q_1\cap K), \cF_\alpha(Q_2\cap K), \dots, \cF_\alpha(Q_t\cap K)  \rangle
\] 
Thus, in order to solve \gt, the feedback vectors for any two sets $K_1$ and $K_2$ have to be different.
We will say that $\mathcal{Q}$ solves \gt with polynomial-time reconstruction if there exists a polynomial-time algorithm that, given the feedback vector outputs all the identifiers of the elements from $K$. Finally we will say that $\mathcal{Q}$ is constructible in polynomial time if there exists a polynomial-time algorithm, that given parameters $n,k,\alpha$ outputs an appropriate sequence of queries.

We assume that both coupled algorithms, construction and decoding, know $n,k,\cF_\alpha$. W.l.o.g., in order to avoid rounding in the presentation, we assume that $n$ and other crucial parameters are powers of $2$.

Alternatively, we can reformulate the problem of non-adaptive \gt under $\cF_\alpha$ feedback into the language of codes. A query sequence translates to code as follows: each element $v \in [N]$ corresponds to a binary codeword with $i$-th position being $1$ or $0$ depending on whether $v$ belongs to the $i$-th query or not. Then the hidden set $K$ is a subset of at most $k$ codewords, for which we calculate the feedback vector by taking the function $\min\{\cdot,\alpha\}$ from elementwise sum of all the codewords corresponding to set $K$. I.e., the feedback is computed for each position $i$, and the whole feedback vector is an input to the decoding algorithm. The objective is to decode the elements of~$K$ from the feedback~vector.


%
\section{Polynomial-time constructions and decoding}
\label{sec:poly}
\subsection{Combinatorial tools}

In this section we present combinatorial tools used in our constructions. We introduce two new tools (Selectors-under-Interference and Strong-Selectors-under-Interference) and use one (Balanced IDs) that has previously been used in similar contexts.

\paragraph{Selector under Interference (SuI).}
For given sets $K_1,K_2\subseteq N$ and an element $v\in K_1$, we say that $S\subseteq N$ {\em selects $v$ from $K_1$ under $\alpha$-interference from $K_2$} if $S\cap K_1 = \{v\}$ and $|S \cap K_2|<\alpha$.
Intuitively, $v$ is a unique representative of $K_1$ in $S$ and the number of representatives of $K_2$ in $S$ is smaller than $\alpha$.
An $(n,\ell,\epsilon,\kappa,\alpha)$-Selector-under-Interference, $(n,\ell,\epsilon,\kappa,\alpha)$-SuI for short, is a sequence of queries $\mathcal{S}=(S_1,\ldots,S_x)$ satisfying:
for every set $K_1\subseteq N$ of at most $\ell$ elements and set $K_2\subseteq N$ of at most $\kappa$ elements,
there are at most $\epsilon \ell$ elements $v\in K_1$ that {\em are not} selected from $K_1$ under $\alpha$-interference from $K_2$
by any query $S_i\in \mathcal{S}$, i.e., 
set $\{v\in K_1 \ : \ \forall_{i\le x} \ S_i\cap K_1 \ne \{v\} \mbox{ or } |S_i\cap K_2| \ge \alpha\}$ has less than $\epsilon\ell$ elements.

In Section~\ref{ssec:SuI} we will describe two polynomial-time constructions of SuI and prove the following results.

\begin{theorem}
\label{thm:SuI}
There is an explicit polynomial-time construction of
an $(n,\ell,\epsilon,\kappa,\alpha)$-SuI, 
for any $\ell$, any $\alpha\le k$ such that $\alpha\ell > c_2 \kappa$ for a sufficiently large constant $c_2$, and any constant $\epsilon\in (0,1/2)$, of size 
$\cO(\min\left\{n,\ell \polylog n\right\})$.
Moreover, every element occurs in $\cO(\polylog n)$ queries.
\end{theorem}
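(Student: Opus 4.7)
The stated size bound $\cO(\min\{n, \ell \polylog n\})$ admits two separate constructions combined via the minimum.

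For the $n$-bound I would use the sequence of all $n$ singletons $\{1\},\ldots,\{n\}$. Every $v \in K_1 \setminus K_2$ is selected by $\{v\}$ with zero interference, so when $\alpha \geq 2$ every element of $K_1$ is selected. For $\alpha = 1$, the only unselected elements of $K_1$ lie in $K_1 \cap K_2$, whose size is at most $\kappa < \ell / c_2 \leq \epsilon \ell$ by the hypothesis $\alpha\ell > c_2\kappa$ with $c_2$ chosen sufficiently large. Each element trivially appears in exactly one query.

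For the $\ell \polylog n$ bound I would construct $x = O(\ell \polylog n)$ queries via derandomization of a random process in which each element is independently included in each query with probability $p = 1/\ell$. Fix an admissible triple $(K_1, K_2, T)$ with $T \subseteq K_1$ of size $\epsilon \ell$ and, without loss of generality, $K_1 \cap K_2 = \emptyset$. A single random query $S$ selects some element of $T$ from $K_1$ under $\alpha$-interference from $K_2$ with probability $\Omega(\epsilon)$: the event that the unique element of $S \cap K_1$ lies in $T$ holds with probability $\epsilon \cdot \ell p (1-p)^{\ell-1} = \Theta(\epsilon)$, and independently (by disjointness of $K_1$ and $K_2$) the interference $|S \cap K_2| \sim \mathrm{Bin}(\kappa, p)$ falls below $\alpha$ with high probability since $\E{|S \cap K_2|} = p\kappa < \alpha/c_2$, with tail decaying exponentially in $\alpha$. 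Independence across the $x$ queries yields per-triple failure probability $e^{-\Omega(\epsilon x)}$.

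The main obstacle is controlling the union bound over triples $(K_1, K_2, T)$, since $|K_2| = \kappa$ may be much larger than $\ell$; a direct bound would require $x = \Omega((\ell + \kappa)\log n)$, exceeding $\ell \polylog n$ whenever $\kappa$ is large. I expect to resolve this by splitting into regimes: when $\kappa \leq \ell \polylog n$ the direct union bound already yields $\ell \polylog n$ queries, while for larger $\kappa$ I would combine an explicit classical selector of size $O(\ell \polylog n)$ (handling uniqueness within $K_1$) with an explicit subsampling step that controls interference simultaneously over every $K_2$, exploiting the sharper $e^{-\Omega(\alpha)}$ Chernoff tail of $|S \cap K_2|$ around its mean $\kappa/\ell$ relative to the threshold $\alpha > c_2 \kappa/\ell$. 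Derandomization is performed using an explicit $O(\log n)$-wise independent sample space of polynomial size, under which the Chernoff-type concentration still applies; alternatively, a pessimistic estimator for the per-triple failure probability drives the method of conditional expectations. The per-element occurrence bound of $O(\polylog n)$ follows from $xp = O(\polylog n)$ in expectation and is enforced as an additional constraint in the derandomization.
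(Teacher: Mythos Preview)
Your proposal diverges from the paper's approach and has a genuine gap in the large-$\kappa$ regime that the vague ``explicit subsampling step'' does not close. Once you try to make your selecting query $S$ work against \emph{every} $K_2$ of size $\kappa$, the Chernoff tail $e^{-\Omega(\alpha)}$ is irrelevant: the adversary chooses $K_2$ \emph{after} seeing $S$, so unless $|S|<\alpha$ she can always plant $\alpha$ elements of $K_2$ inside $S$. Hence controlling interference ``simultaneously over every $K_2$'' forces either tiny queries (incompatible with isolating elements of $K_1$ inside $O(\ell\polylog n)$ queries when $n/\alpha$ is large) or a genuine union bound over all $\binom{n}{\kappa}$ choices of $K_2$, which drives the length to $\Omega(\kappa\log n)$. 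Your derandomization sketch is also not polynomial: the method of conditional expectations would sum a pessimistic estimator over exponentially many triples $(K_1,K_2,T)$, and $O(\log n)$-wise independence does not rescue the needed two-sided concentration at this granularity.

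The paper sidesteps both issues by a structurally different construction: it composes an explicit $(\ell\epsilon,d,\epsilon)$-disperser $G=(V,W,E)$ with an explicit $(n,c\delta)$-strong-selector $\cT$, setting $S_{am+b}=N_G(w_a)\cap T_b$. Correctness is not by union bound but by an edge-counting argument: if some $L\subseteq K_1$ of size $\epsilon\ell$ were entirely unselected, dispersion makes $N_G(L)$ occupy a $(1-\epsilon)$-fraction of $W$, and every such $w$ must have either $>c\delta$ neighbours in $K_1$ or $\ge\alpha$ neighbours in $K_2$; either case forces more than $\ell d$ (resp.\ $\kappa d$) edges out of $K_1$ (resp.\ $K_2$), contradicting the left-degree bound. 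This counting argument handles \emph{all} $K_2$ at once with no dependence of the length on $\kappa$, and polynomial-time constructibility is inherited directly from the explicit disperser of Ta-Shma--Umans--Zuckerman and the Kautz--Singleton selector.
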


\begin{theorem}
\label{thm:SuI-RR}
There is an explicit polynomial-time construction of
an $(n,\ell,\epsilon,\kappa,\alpha)$-SuI, 
for any $\ell$, any $\alpha\le \ell$ and $\ell\le c_2\kappa/\alpha$ 
for a sufficiently large constant $c_2$, 
and any constant $\epsilon\in (0,1/2)$, of size 
$O\left(\min\left\{n,(\kappa/\alpha) \polylog n\right\}\right.$ 
$\left. +\frac{n}{\alpha}\polylog n\right)$.
Moreover, every element occurs in $\cO(\polylog n)$~queries.
\end{theorem}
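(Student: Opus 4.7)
The plan is to obtain the required $(n,\ell,\epsilon,\kappa,\alpha)$-SuI as the concatenation of two components: an $\alpha$-Round-Robin $\mathcal{R}$ of size $\cO((n/\alpha)\polylog n)$ consisting of $\alpha$-sized blocks drawn from $\cO(\polylog n)$ explicit pseudo-random partitions of $N$, plus a supplementary sub-selector $\mathcal{A}$ of size $\cO(\min\{n,(\kappa/\alpha)\polylog n\})$ obtained by applying Theorem~\ref{thm:SuI} on a suitably chosen coarser universe (or, in the worst case, as a plain singleton Round-Robin on $N$). By construction both components are polynomial-time constructible and every element of $N$ occurs in $\cO(\polylog n)$ of their queries, matching the ``Moreover'' clause.

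For the analysis I would fix any $K_1\subseteq N$ with $|K_1|\le\ell$ and any $K_2\subseteq N$ with $|K_2|\le\kappa$. Call a block of a partition \emph{heavy} if it contains at least $\alpha$ elements of $K_2$, and \emph{clean for $v\in K_1$} if $v$ is the unique element of $K_1$ inside it. Since each partition has at most $\kappa/\alpha$ heavy blocks, the expected number of $v\in K_1$ falling into some heavy block of a uniformly random partition is at most $\ell\kappa/n$, which is small using $\ell\le c_2\kappa/\alpha$ and $\kappa\le n$. A similar birthday-style computation bounds the expected number of $v\in K_1$ not clean for itself in a random partition by a comparable quantity. Taking $\cO(\polylog n)$ pseudo-random partitions and derandomizing via almost $\polylog n$-wise independent hashing yields, with a union bound over all canonical $(K_1,K_2)$-pairs, the worst-case guarantee that at least $(1-\epsilon/2)\ell$ elements $v\in K_1$ are selected by $\mathcal{R}$. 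The remaining $\le (\epsilon/2)\ell$ bad elements of $K_1$ — those stuck in heavy blocks or colliding with another $K_1$ element in every partition — are then mopped up by $\mathcal{A}$, whose analysis mirrors the proof of Theorem~\ref{thm:SuI} applied at the coarser scale of $\alpha$-blocks (noting that the reduced $K_2'$ consists only of the heavy super-elements and has size $\cO(\kappa/\alpha)$).

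The main obstacle will be derandomizing the $\cO(\polylog n)$ pseudo-random partitions so that the joint ``non-heavy \emph{and} clean for $v$'' property holds simultaneously for every admissible $(K_1,K_2)$, not only in expectation, in explicit polynomial time, while keeping the per-element query load at $\cO(\polylog n)$. The direct conditional-probability method tends to clash with the block structure of partitions, so I expect the right approach is to sample each partition via an explicit almost $\polylog n$-wise independent hash family — analogous to how superimposed codes are derandomized — and then to verify that the SuI property survives the weakened independence via an appropriate concentration argument. A secondary subtlety is the lifting of $\mathcal{A}$ from super-elements back to $N$: naively replacing a super-element by its whole $\alpha$-block would inflate $K_2$-interference above the $\alpha$ threshold, so one must instead lift by a carefully chosen coordinate-wise scheme that preserves the $<\alpha$ interference bound without multiplying the query count by $\alpha$.
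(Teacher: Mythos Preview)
Your proposal takes a substantially more complicated route than the paper, and the difficulties you flag as ``main obstacles'' are in fact avoidable.

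The paper's argument is a one-line reduction to Theorem~\ref{thm:SuI}. Since $\ell\le c_2\kappa/\alpha$, one simply builds the $(n,\,c_2\kappa/\alpha,\,\epsilon,\,\kappa,\,\alpha)$-SuI $\cS$ guaranteed by Theorem~\ref{thm:SuI} (note the first parameter is $c_2\kappa/\alpha$, not $\ell$), and then partitions each query $S_i\in\cS$ into $\lceil |S_i|/\alpha\rceil$ disjoint pieces of size at most $\alpha$. Correctness is immediate by \emph{monotonicity of selection}: if $S_i\cap K_1=\{v\}$ and $|S_i\cap K_2|<\alpha$, then the piece of $S_i$ containing $v$ also intersects $K_1$ in $\{v\}$ alone and intersects $K_2$ in fewer than $\alpha$ elements. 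Since $|K_1|\le\ell\le c_2\kappa/\alpha$, the SuI property of $\cS$ applies. For the size bound, the original $\cS$ has $\cO(\min\{n,(\kappa/\alpha)\polylog n\})$ queries, and the total number of element occurrences across all $S_i$ is $\cO(n\polylog n)$ (each element is in $\cO(\polylog n)$ queries), so chopping adds at most $\cO((n/\alpha)\polylog n)$ extra pieces. The per-element occurrence count is unchanged.

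Your plan instead builds new pseudo-random partitions from scratch, derandomizes them against all $(K_1,K_2)$ pairs, and then adds a second ``lifting'' construction $\mathcal{A}$ on a coarser universe. Both steps you identify as obstacles --- the union-bound derandomization of the partitions and the lifting of $\mathcal{A}$ without blowing up interference --- are genuine technical hurdles that you have not resolved, and there is no need to: the monotonicity observation above sidesteps them entirely. Your expectation bound ``$\ell\kappa/n$'' for elements of $K_1$ landing in heavy blocks is also not quite right (with $\alpha$-sized blocks, a heavy block is entirely contained in $K_2$, so only elements of $K_1\cap K_2$ can land there), which suggests the analysis would need further repair even in the random case.
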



\paragraph{Strong Selector under Interference (SSuI).}
An $(n,\ell,\kappa,\alpha)$-Strong-Selector-under-Interference, $(n,\ell,\kappa,\alpha)$-SSuI for short, is a sequence of queries $\mathcal{T}=(T_1,\ldots, T_x)$ satisfying:
for every set $K_1\subseteq N$ of at most $\ell$ elements and set $K_2\subseteq N$ of at most $\kappa$ elements,
{\em every} element $v\in K_1$ is selected from $K_1$ under $\alpha$-interference from $K_2$
by some query $T_i\in \mathcal{T}$, i.e., 
set $\{v\in K_1 \ : \ \forall_{i\le x} \ T_i\cap K_1 \ne \{v\} \mbox{ or } |T_i\cap K_2| \ge \alpha\}$ is empty.
An $(n,\ell,\kappa,\alpha)$-Strong-Selector-under-Interference could be also viewed as $(n,\ell,0,\kappa,\alpha)$-Selector-under-Interference.

In Section~\ref{ssec:SSuI} we will describe a polynomial-time construction of SSuI, which essentially is a Kautz and Singleton~\cite{kautz1964nonrandom} construction for adjusted parameters, and prove that it satisfies the additional SSuI property.

\begin{theorem}
\label{thm:SSuI}
There is an explicit polynomial-time construction of
an $(n,\ell,\kappa,\alpha)$-SSuI of length $O(\ell^2 \log_\ell^2 n)$, 
provided $\ell\ge c_2 \kappa/\alpha$ for some constant $c_2$ and for a sufficiently large constant $c>0$. 
Moreover, every element occurs in $\cO(\ell\log_\ell n)$ queries.
\end{theorem}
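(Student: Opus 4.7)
The plan is to adapt the classical Kautz--Singleton construction: take a Reed--Solomon code of dimension $d$ over $\mathbb{F}_q$ and concatenate with the identity (unary) code, so that every identifier yields a binary test vector of length $q^2$, and then argue that when $q$ is chosen a sufficiently large multiple of $\ell d$, the interference from $K_2$ is swamped by the usual Reed--Solomon separation of $K_1$.

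First, I would set $d=\lceil \log n/\log q\rceil$ and $q=c\ell d$ for a large enough constant $c$, so that $q^d\ge n$ and one can inject each of the $n$ identifiers into a distinct polynomial of degree at most $d-1$ over $\mathbb{F}_q$. For an identifier $v$, write $(v_1,\ldots,v_q)\in\mathbb{F}_q^q$ for its evaluation vector at a fixed enumeration of the field. The queries are the sets $T_{i,j}=\{v:v_i=j\}$ for $i\in[q]$ and $j\in\mathbb{F}_q$; there are $q^2=O(\ell^2\log_\ell^2 n)$ of them, and each identifier lies in exactly the $q=O(\ell\log_\ell n)$ queries $T_{1,v_1},\ldots,T_{q,v_q}$, matching the advertised bounds. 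Polynomial-time constructibility is immediate from standard Reed--Solomon encoding.

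The heart of the argument is a counting step. Fix $K_1$, $K_2$ and $v\in K_1$; call a coordinate $i\in[q]$ \emph{good} if $T_{i,v_i}\cap K_1=\{v\}$, i.e.\ no other $u\in K_1$ satisfies $u_i=v_i$. Since two distinct polynomials of degree less than $d$ agree on at most $d-1$ points, each of the at most $\ell-1$ competing identifiers in $K_1$ kills at most $d-1$ coordinates, so at least $q-(\ell-1)(d-1)$ coordinates are good. Call $i$ \emph{bad} if $|T_{i,v_i}\cap K_2|\ge\alpha$. Double counting gives
\[
\sum_{i=1}^{q}\,|T_{i,v_i}\cap K_2|=\sum_{u\in K_2}\bigl|\{i:u_i=v_i\}\bigr|\,,
\]
which is at most $(d-1)\kappa$ when $v\notin K_2$ (and picks up an extra additive $q$, absorbed whenever $\alpha\ge 2$), so at most $(d-1)\kappa/\alpha$ coordinates are bad.

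Consequently, there is a coordinate that is simultaneously good and not bad whenever $q-(\ell-1)(d-1)>(d-1)\kappa/\alpha$, which under the hypothesis $\ell\ge c_2\kappa/\alpha$ reduces to $q>(d-1)\ell(1+1/c_2)$; taking the constant $c>1+1/c_2$ in $q=c\ell d$ closes the gap, and the corresponding query $T=T_{i,v_i}$ witnesses $T\cap K_1=\{v\}$ and $|T\cap K_2|<\alpha$, which is exactly the SSuI property. The main obstacle I expect is the joint calibration of $q$, $d$ and $c$: one must simultaneously keep $q^d\ge n$ (to embed every identifier), keep $q=\Theta(\ell\log_\ell n)$ (to hit the advertised length $O(\ell^2\log_\ell^2 n)$), and leave enough slack in the counting to accommodate boundary cases such as $v\in K_2$, very small $\alpha$, or $\kappa$ near its upper ceiling $c_2\alpha\ell$.
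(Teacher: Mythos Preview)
Your proposal is correct and follows essentially the same Kautz--Singleton/Reed--Solomon route as the paper: identical construction (evaluation vectors concatenated with the unary code), the same ``good coordinate'' count $q-(\ell-1)d$ from the degree bound, and the same averaging argument bounding the bad coordinates by $\kappa d/\alpha$. The one place where the paper is cleaner is the calibration you flagged: rather than the circular pair $d=\lceil\log n/\log q\rceil$, $q=c\ell d$, the paper fixes $d=\lceil\log_\ell n\rceil$ first and then sets $q=c\ell d$, which makes both $q^{d+1}\ge n$ and $q=\Theta(\ell\log_\ell n)$ immediate.
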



\paragraph{Balanced IDs.} 
Each element $i$ in $[n]$ has a unique ID represented by $2\log_2 n$ bits, in which the number of 1's is the same as the number of 0's; e.g., take a binary representations of elements $i$ and $n-i$, each in $\log_2 n$ bits, and concatenate them. Balanced IDs have previously been used in algorithms for decoding elements 	in Group Testing (see e.g.,~\cite{lee2019saffron}).

\subsection{Construction of queries, decoding and analysis}


\paragraph{Algorithm constructing queries.}

Let us take 
$(n,\ell,1/2,k,\alpha-1)$-SuI $\cS^{(\ell)}$, 
for $\ell$ being a power of $2$ ranging down from $k$ to 
$c_2 k /(\alpha-1)$
(w.l.o.g. we could also assume that 
$c_2 k /(\alpha-1)$
is a power of $2$). 
Next, for each set $S$ in these selectors we add the following family $\cR(S)=\{R_i(S)\}_{i=1}^{2 \log_2 n}$ 
of sets
$R_i(S)=\{v\in S \ : \ \lfloor v / 2^{i-1}\rfloor = 1 \mod 2\}$. Intuitively $R_i(S)$ is the set of elements from $S$ that have $1$ on $i$-th least significant bit of Balanced ID. 
Let us call the obtained enhanced selectors (i.e., with additional families $\cR(S)$,
for every set $S$ in the original selector) $\bar{\cS}^{(\ell)}$. 
Then we concatenate selectors $\bar{\cS}^{(\ell)}$, starting from the largest $\ell=k$, to the smallest value $\ell=c_2 k/(\alpha-1)$.
An 
$(n,c_2 k/(\alpha-1),k,\alpha-1)$-SSuI $\cT$
is concatenated at the end, with the same replacement of bits 1 and 0 in the original matrix as in the above $(n,\ell,1/2,k,\alpha-1)$-SuI's. 
Algorithm~\ref{alg:construction} presents a pseudocode of the construction algorithm.

\begin{algorithm}[h]
$\ell\leftarrow k, \mathcal{Q} \leftarrow \langle \rangle$\;
\While{$\ell> \frac{c_2l}{\alpha -1}$}{
$\mathcal{S} \leftarrow (n,\ell,1/2,k,\alpha-1)$-SuI\;
\ForEach{$S \in \mathcal{S}$}{
	$\mathcal{Q}$.\textbf{append}$(S)$\;	
	\For{$i\leftarrow 1$ \KwTo $2\log_2 n$}{
	\tcc{Add a set of elements from $S$ that have $1$ on $i$-th least significant bit of Balanced ID.}
	$R_i(S) \leftarrow \{v\in S \ : \ \lfloor v / 2^{i-1}\rfloor = 1 \mod 2\}$\;
	$\mathcal{Q}$.\textbf{append}$\left(R_i(S)\right)$\;
}
}
$\ell \leftarrow \ell/2$\;
}
$\cT \leftarrow (n,c_2 k/(\alpha-1),k,\alpha-1)$-SSuI\;
\ForEach{$T \in \cT$}{
$\mathcal{Q}$.\textbf{append}$(T)$\;	
\For{$i\leftarrow 1$ \KwTo $2\log_2 n$}{
	$R_i(T) \leftarrow \{v\in T \ : \ \lfloor v / 2^{i-1}\rfloor = 1 \mod 2\}$\;
	$\mathcal{Q}$.\textbf{append}$\left(R_i(T)\right)$\;
}
}
\KwRet{$\mathcal{Q}$}
\caption{Construction of a sequence of queries solving Group Testing.
\label{alg:construction}}
\end{algorithm}

\paragraph{Decoding algorithm.}

During the decoding algorithm we process, in subsequent iterations, the feedbacks from enhanced selectors $\bar{\cS}^{(\ell)}$ for $\ell = k,k/2,k/4,\dots, c_2k/(\alpha-1)$. We will later prove, by induction, that during processing $\bar{\cS}^{(\ell)}$, $\ell/2$ new elements from $K$ are decoded. To show this, we consider any iteration and let set $K_1$ be the set of the elements that have been decoded in previous iterations while set $K_2 = K \setminus K_1$ be the set of unknown elements. We treat $K_1$ as the interfering set and, by the properties of $\cS^{(\ell)}$, we know that for at least $\ell/2$ elements $v$, there exists a query $S \in \cS^{(\ell)}$, such that $v\in S$, $|K_1 \cap S| < \alpha - 1$, $|K_2 \cap S| = 1$. We observe that since we already know the identifiers of all the elements from the interfering set $K_1$, then using feedbacks from the additional queries $\cR(S)$ (corresponding to balanced IDs) we can exactly decode the identifier of $v$. We do this for all $\ell/2$ elements that are possible to decode in this iteration and we proceed to the next iteration. After considering all Selectors-under-Interference, we have only at most $c_2k/(\alpha-1)$ unknown elements. To complete the decoding we use a Strong-Selector-under-Interference, where the decoding procedure is exactly the same as in the case of SuI (the interfering set is also the set of already decoded elements). The properties of SSuI guarantee that we decode the identifiers of all the remaining elements from $K$. See the pseudocode of decoding Algorithm~\ref{alg:decoding} for details).

\begin{algorithm}[H]
\KwData{Feedback sequence.}
\KwOut{Set $K$}
\caption{Decoding of the elements.
\label{alg:decoding}}
$\ell \leftarrow k, K_{acc}\leftarrow \emptyset$ \tcc*{In set $K_{acc}$ we accumulate the decoded elements.}
\While{$\ell> \frac{c_2l}{\alpha -1}$}{
\tcc{We want to decode $l/2$ elements.}
\For{$i \leftarrow 1$ \KwTo $l/2$}{
\tcc{Look for queries in $\cS^{(\ell)}$, for which we can decode a new element.}
\ForEach{$S \in \cS^{(\ell)}$}{
\If{$\gfeed(S) < \alpha -1$ \KwSty{and} $|S \cap K_{acc}| = \gfeed(S) - 1$}{
$K_{acc} \leftarrow K_{acc} \cup \{DecodeElement(S,K_{acc})\}$
}
}

}
$\ell \leftarrow \ell / 2$\;

}
\tcc{Decode all the remaining elements.}
\tcc{Iteratively find queries in $\cT$, for which we can decode a new element.}
\ForEach{$T \in \cT$}{
\If{$\gfeed(T) < \alpha -1$ \KwSty{and} $|T \cap K_{acc}| = \gfeed(T) - 1$}{
$K_{acc} \leftarrow K_{acc} \cup \{DecodeElement(T,K_{acc})\}$
}
}
\KwRet{$K_{acc}$}

\setcounter{AlgoLine}{0}
\vspace{1mm}
  \hrule
\vspace{1mm}
  \SetKwProg{myproc}{Procedure}{}{}
  \myproc{DecodeElement($Q$,$K_{acc}$)}{
			$v\leftarrow 0$\;
\For{$j \leftarrow 2\log_2 n$ \KwSty{downto} $0$}{
\tcc{Take the feedback from set $R_j(Q)$. Calculate the feedback from set $R_j(Q)$, if hidden set was exactly $K_{acc}$. The difference is the $j$-th least significant bit of Balanced ID of the new element $v$.}
$v \leftarrow 2\cdot v$\;
$v \leftarrow v + \gfeed(R_j(Q)) - |R_j(Q) \cap K_{acc}|$
}
\KwRet{$v$}
}
\end{algorithm}

\begin{lemma}
\label{lem:smallK}
There is an explicit polynomial-time algorithm constructing non-adaptive queries $Q_1,\ldots,Q_m$ and decoding any hidden set $K$
of size at most $k\le n$, 
from the feedback vector in polynomial time, under feedback 
$\cF_{\alpha}$ 
and for 
$m=O((k/\alpha)^2 \log^3 n+ k\polylog n)$ queries.
Moreover, every element occurs in $\cO(\frac{k}{\alpha}\log^2 n +\polylog n)$ queries.
\end{lemma}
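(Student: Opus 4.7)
The plan is to verify three things: (i) correctness of Algorithm~\ref{alg:decoding} on the queries built by Algorithm~\ref{alg:construction}, (ii) the query-count bound $m=O((k/\alpha)^2\log^3 n+k\polylog n)$, and (iii) the per-element occurrence bound $O(\frac{k}{\alpha}\log^2 n+\polylog n)$. Polynomial construction/decoding time will follow immediately from the polynomial constructions guaranteed by Theorems~\ref{thm:SuI}, \ref{thm:SuI-RR}, and~\ref{thm:SSuI}, plus the linear-time bit-splitting used to form each $R_i(\cdot)$.

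For correctness I would argue by induction on the phases $\ell=k,k/2,\ldots$, maintaining the invariant that at the start of the $\ell$-phase the unknown set $U=K\setminus K_{acc}$ satisfies $|U|\le\ell$. In the inductive step I would apply the $(n,\ell,1/2,k,\alpha-1)$-SuI property to the pair $(U,K_{acc})$: both size constraints hold since $|U|\le\ell$ by induction and $K_{acc}\subseteq K$ gives $|K_{acc}|\le k$. Hence at least $|U|-\ell/2\ge|U|/2$ elements $v\in U$ are selected by some $S\in\cS^{(\ell)}$ with $S\cap U=\{v\}$ and $|S\cap K_{acc}|<\alpha-1$. The crucial observation is that for such $v$ we have $|S\cap K|=1+|S\cap K_{acc}|\le\alpha-1<\alpha$, so $\gfeed(S)$ equals the true intersection size; subtracting the known $|S\cap K_{acc}|$ flags the ``exactly one unknown'' condition in $S$. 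Because each $R_i(S)\subseteq S$, the intersection $R_i(S)\cap K$ is also strictly below the cap, so $\gfeed(R_i(S))-|R_i(S)\cap K_{acc}|\in\{0,1\}$ yields the $i$-th balanced-ID bit of $v$, and concatenating these $2\log_2 n$ bits recovers $v$'s identifier uniquely. After the phase $|U|\le\ell/2$, so the invariant propagates to phase $\ell/2$. When the loop exits one has $|U|\le c_2 k/(\alpha-1)$, and the final $(n,c_2 k/(\alpha-1),k,\alpha-1)$-SSuI $\cT$ ensures that \emph{every} remaining element is selected, so the same bit-extraction argument decodes all residual elements.

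For the query count I would sum the sizes of the $O(\log\alpha)$ SuI blocks via Theorem~\ref{thm:SuI} (applicable because $(\alpha-1)\ell>c_2 k$ throughout the loop), each of size $O(\ell\polylog n)$, inflated by the factor $1+2\log_2 n$ for the balanced-ID auxiliaries $R_i(S)$; the geometric series is dominated by the top term $\ell=k$, giving $O(k\polylog n)$. The final SSuI contributes $O((c_2 k/(\alpha-1))^2\log^2 n)=O((k/\alpha)^2\log^2 n)$ by Theorem~\ref{thm:SSuI}, inflated by another $O(\log n)$ factor for the $R_i$'s, yielding $O((k/\alpha)^2\log^3 n)$. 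The per-element bound follows analogously: each SuI block contributes $O(\polylog n)$ occurrences of any fixed $v$ (Theorem~\ref{thm:SuI}), inflated by at most $1+\log_2 n$ for the $R_i$'s (since $v$ sits in exactly half of them by balanced-ID construction), still $\polylog n$ per block; summing over $O(\log\alpha)$ blocks stays within $\polylog n$. The SSuI contributes $O((k/\alpha)\log n)$ by Theorem~\ref{thm:SSuI}, inflated by $O(\log n)$ for the $R_i$'s, giving $O((k/\alpha)\log^2 n)$.

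The step I expect to be the main obstacle is the feedback-capping bookkeeping rather than the counting. The entire decoding argument hinges on the deliberate use of $(\alpha-1)$-interference (rather than $\alpha$-interference) in the underlying SuI/SSuI: that one-unit of slack is precisely what forces $|S\cap K|<\alpha$ whenever a fresh element is isolated, and therefore makes $\gfeed$ behave as an \emph{exact} size on both $S$ and every subset $R_i(S)$ used in balanced-ID recovery. Without this slack, a capped response on some $R_i(S)$ would be ambiguous and bit extraction would fail. Verifying that this slack propagates cleanly through the induction, and that the inductive size bound $|U|\le\ell$ always aligns with the $\ell$-parameter of the SuI invoked in that phase (including the transition into the final SSuI phase at $\ell=c_2 k/(\alpha-1)$), is the only genuinely delicate part of the argument.
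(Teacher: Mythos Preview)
Your proposal is correct and follows essentially the same route as the paper: induction over the $\ell$-phases with $K_1$ the undecoded set and $K_2=K_{acc}$ as the interfering set, the $(\alpha-1)$-slack to keep every relevant $S$ and each $R_i(S)$ strictly below the cap, balanced-ID bit extraction, and the SSuI for the tail; the length and occurrence bookkeeping matches the paper's use of Theorems~\ref{thm:SuI} and~\ref{thm:SSuI}. Two small remarks: (i) the inequality $|U|-\ell/2\ge |U|/2$ is not generally true when $|U|<\ell$, but you do not need it---the SuI guarantee ``at most $\ell/2$ unselected'' directly gives $|U'|\le \ell/2$, which is all the invariant requires; (ii) Theorem~\ref{thm:SuI-RR} is not used in this lemma (it enters only in Lemma~\ref{lem:largeK}), so you can drop that reference here.
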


\begin{proof}
We start from describing a procedure of revealing elements in any given set $K$ of at most $k$ elements,
together with a formal (inductive) argument of its correctness.
Our first goal is to show that by the beginning of 
$\bar{\cS}^{(\ell)}$,
for $\ell$ stepping down from $k$ to $c_2 k /(\alpha-1)$,
we have not learned about the identity of at most $\ell$ elements from the hidden set $K$.

The proof is by induction -- it clearly holds in the beginning of the computation,
as the set $K$ has at most $\ell=k$ elements.
We prove the inductive step: by the end of $\bar{\cS}^{(\ell)}$, at most $\ell/2$ elements are not learned. We set $K_2$ to be the set of learned elements and $K_1 = K \setminus K_2$. Clearly, $|K_2| \leq k$, and by the inductive assumption $|K_1| \leq \ell\leq k$.
For such $K_1$ and $K_2$, by the definition of SuI, there are at most $\ell/2$ elements from $K_1$ that are not occurring in some round without other such elements or with at least 
$\alpha -1$
of already learned elements from $K_2$.
Consider a previously not learned element $v \in K_1$, for which there exists a {\em good query} in $\bar{\cS}^{(\ell)}$, \ie 
a query 
$S \in \bar{\cS}^{(\ell)}$ 
such that 
$S \cap K_1 = \{v\}$ and $|S \cap K_2| < \alpha-1$.
At this point 
of decoding of set $K$ 
we know the Balanced IDs of all elements from set $K_2$. Hence we can calculate the $2\log_2 n$-bit feedback vector from sets 
$K_2\cap R_1(S), K_2\cap R_2(S),\dots,K_2\cap R_{2\log_2 n}(S)$.
We compare this feedback vector with the output of the enhanced selector, which is the feedback vector for sets $K \cap R_1(S), K \cap R_2(S),\dots,K \cap R_{2\log_2 n}(S)$.
The difference between the latter and the former is exactly the Balanced ID of $v$. 
In case this difference does not form a Balanced ID of any element, \ie it has some value bigger than $1$ or otherwise the number of 1's is different from $\log_2 n$, or in case $\cF_{\alpha}(K\cap S) = \alpha$ (recall that $S$ is also in the constructed selector)
 the feedback from this $\cR(S)$ is ignored. This is done to avoid misinterpreting the feedback
and false discovery of an element which is not in $K$. Indeed, 
first note that the fact $|K\cap S|\ge \alpha$ will automatically discard the part of the feedback from $K \cap R_1(S), K \cap R_2(S),\dots,K \cap R_{2\log_2 n}(S)$,
as it indicates that the intersection is too large to provide correct decoding of an element. Second, assuming $|K\cap S|< \alpha$, if there are no elements in $K_1\cap S$ then the difference between feedbacks gives vector of zeros, and if there will be at least two elements in $K_1\cap S$, the difference between feedbacks will contain a value of at least $2$ or all 1's, as it will be a 
bitwise sum of at least two Balanced IDs of $\log_2 n$ ones each.
By the definition of SuI we can find $l/2$ such elements $v$. This shows that during decoding of enhanced $\bar{\cS}^{(\ell)}$ we learn the identities of $\ell/2$ new elements. 
This completes the inductive proof. 
Note here that the inductive step, being one of $O(\log n)$ steps, defines a polynomial time algorithm decoding some
elements one-by-one -- indeed, it computes two feedbacks of polynomial number of queries, computes the difference and deducts based on the
structure of subsequent blocks of $O(\log n)$~size.

The above analysis implies, that before applying 
$(n,c_2 k/(\alpha-1),k,\alpha-1)$-SSuI 
we have not discovered at most 
$c_2 k/(\alpha-1)$
elements.
Thus, by definition, 
$(n,c_2 k/(\alpha-1),k,\alpha-1)$-SSuI 
combined with Balanced IDs 
reveals all the remaining elements in the same way as the SuI's above -- the only difference in the argument is that instead
of leaving at most $\ell/2$ undiscovered elements in the $\ell$-th inductive step, due to the nature of SuI's,
the SSuI guarantees that {\em every} undiscovered element will occur in a good query. The same argument as for SuI's
proves that the decoding algorithm defined this way works in polynomial time.

By Theorem~\ref{thm:SuI} below, the length of $(n,\ell,1/2,k,\alpha)$-SuI is \linebreak $\cO(\min\left\{n,\ell \polylog n\right\})$,
which sums up to $\cO(\min\left\{n,k \polylog n\right\})$, and is multiplied by $\Theta(\log n)$ due to amplification by Balanced IDs.
By Theorem~\ref{thm:SSuI}, the length of $(n,c_2 k/\alpha,k,\alpha)$-SSuI is $O((k/\alpha)^2\log^2 n)$,
and it is also increased by factor $\Theta(\log n)$ due to Balanced IDs.
If we apply the above reasoning with respect to the number of queries containing an element, we get that every element occurs in $\cO(\frac{k}{\alpha}\log^2 n +\polylog n)$ queries.
\end{proof}

\paragraph*{Implementing $\alpha$-Round-Robin for large values of $k/\alpha$}
The question arises from the previous result if one could efficiently construct a shorter sequence of queries 
if $(k/\alpha)^2 > n/\alpha$? 
In the case of full feedback (\ie $\alpha=k$) the common way to deal with large values of $k$ is via Round-Robin, which
means that queries are singletons and consequently, the length of such query sequence is $n$. 
This also works for an arbitrary value of $\alpha\le k$, however
the lower bound in Theorem~\ref{thm:lower} and the existential upper bound in Theorem~\ref{thm:existential} 
suggest that in such case there could exist
a shorter query system of length $O((k+(n/\alpha))\polylog n)$. Indeed, if we modify our construction in such case, we could obtain such a goal.
Namely, we concatenate: 
\begin{itemize}
\item
selectors $\bar{\cS}^{(\ell)}$, for $\ell$ being a power of $2$ starting from the largest $\ell=k$ and finishing at $\ell=2c_2\kappa/\alpha$; followed by
\item
selectors $\bar{\cS}|_{\alpha}^{(\ell)}$, for $\ell$ being a power of $2$ starting from the largest $\ell=c_2\kappa/\alpha$
and finishing with $\ell=1$.
\end{itemize}

Then we enhance them based on Balanced IDs, as in the previous construction. Then, applying Theorem~\ref{thm:SuI} for concatenated $\bar{\cS}^{(\ell)}$ and Theorem~\ref{thm:SuI-RR} for concatenated $\bar{\cS}|_{\alpha}^{(\ell)}$, instead of combination of Theorems~\ref{thm:SuI} and~\ref{thm:SSuI} as it was in the proof of Lemma~\ref{lem:smallK} with respect to $\bar{\cS}^{(\ell)}$, we get the following~result.

\begin{lemma}
\label{lem:largeK}
There is an explicit polynomial-time algorithm constructing non-adaptive queries $Q_1,\ldots,Q_m$ and decoding any hidden set $K$
of size at most $k\le n$, 
from the feedback vector in polynomial time, under feedback 
$\cF_{\alpha}$ 
and for 
$m=O((k+n/\alpha)\polylog n)$ queries.
Moreover, each element occurs in  $\cO(\polylog n)$ queries.
\end{lemma}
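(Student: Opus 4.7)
The plan is to adapt the proof of Lemma~\ref{lem:smallK} by replacing its final SSuI stage with a second cascade of Selectors-under-Interference supplied by Theorem~\ref{thm:SuI-RR}. Throughout, the decoding argument (unknown elements play the role of $K_1$, already-decoded elements play the role of $K_2$, and Balanced IDs append enough feedback to recover the identifier of whichever element appears alone in a good query) is reused verbatim; only the selector family and the length accounting change.

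First I would run the initial cascade $\bar{\cS}^{(\ell)}$ for $\ell$ halving from $k$ down to $2c_2 k/\alpha$, exactly as in Lemma~\ref{lem:smallK}. In this range the hypothesis $\alpha \ell > c_2 k$ of Theorem~\ref{thm:SuI} holds, so the inductive invariant ``at most $\ell$ elements of $K$ remain undiscovered before $\bar{\cS}^{(\ell)}$'' is maintained: the SuI property combined with Balanced IDs decodes at least $\ell/2$ fresh elements per stage. When this cascade finishes we have at most $c_2 k/\alpha$ undiscovered elements.

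Next I would run the new cascade $\bar{\cS}|_{\alpha}^{(\ell)}$ for $\ell$ halving from $c_2 k/\alpha$ down to $1$. For each such $\ell$, Theorem~\ref{thm:SuI-RR} applies with $\kappa=k$ (since by construction $\ell \le c_2 k/\alpha$), so repeating the same inductive step halves the undiscovered set on every stage and terminates with all of $K$ decoded. The polynomial-time decoding of each individual element is, as in Lemma~\ref{lem:smallK}, just taking the difference of two feedback vectors of polynomial length and reading off the Balanced ID. For the length, Theorem~\ref{thm:SuI} gives the first cascade total length $\sum_\ell \cO(\min\{n,\ell\polylog n\})\cdot\cO(\log n)$ which sums geometrically to $\cO(k\polylog n)$, while Theorem~\ref{thm:SuI-RR} gives each stage of the second cascade length $\cO((k/\alpha)\polylog n + (n/\alpha)\polylog n)=\cO((n/\alpha)\polylog n)$; multiplying by the $\cO(\log n)$ Balanced-ID factor and summing the $\cO(\log n)$ stages keeps the second cascade at $\cO((n/\alpha)\polylog n)$. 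Together $m=\cO((k+n/\alpha)\polylog n)$. The per-element occurrence bound $\cO(\polylog n)$ is inherited directly from the ``moreover'' clauses of Theorems~\ref{thm:SuI} and~\ref{thm:SuI-RR}, amplified only by $\cO(\log n)$ for the Balanced-ID enhancement.

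The main obstacle I anticipate is bookkeeping around the transition between the two cascades and at the low end of the second cascade: one has to check that the set of elements decoded in the first cascade is indeed a legitimate $K_2$ of size at most $k$ when invoking Theorem~\ref{thm:SuI-RR}, and that the precondition $\alpha\le\ell$ of Theorem~\ref{thm:SuI-RR} does not collapse the cascade prematurely. The latter is handled by observing that once the number of undiscovered elements drops to $0$ the remaining stages are vacuous, so only stages with $\ell$ at least the current deficit need to be effective; the decoding algorithm can be written to simply skip stages whose SuI-RR is not well-defined, since by the inductive invariant the deficit itself is already $\le\ell$ and halves at each effective stage.
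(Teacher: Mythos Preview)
Your proposal is correct and follows essentially the same route as the paper: replace the SSuI tail of Lemma~\ref{lem:smallK} by a second cascade of SuI's from Theorem~\ref{thm:SuI-RR}, keep the same Balanced-ID decoding argument, and account for the length as a geometric sum for the first cascade plus $\cO(\log n)$ copies of an $\cO((n/\alpha)\polylog n)$-length selector for the second. Your remark about the precondition $\alpha\le\ell$ in Theorem~\ref{thm:SuI-RR} is a legitimate bookkeeping concern that the paper's proof does not spell out; your resolution (later stages are vacuous once the deficit hits zero) is adequate, and one could also note that the underlying object $\cS|_\alpha$ is built once from the $(n,c_2\kappa/\alpha,\epsilon,\kappa,\alpha)$-SuI and its selection property is monotone in $\ell$, so the parameter $\ell$ in the second cascade is really just the invariant bound on undiscovered elements rather than a fresh structural requirement.
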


\begin{proof}
The proof is analogous to the proof of Lemma~\ref{lem:smallK}, except that we continue proving the invariant until $\ell=2c_2\kappa/\alpha$, using same properties guaranteed by Theorems~\ref{thm:SuI}, and continue the invariant until
$\ell=1$, using SuI's of slightly different length formula from Theorem~\ref{thm:SuI-RR}.
The correctness argument, as well as polynomial-time query construction and decoding of the elements, are the same
as in the invariant proof in Lemma~\ref{lem:smallK}.
Then, by Theorem~\ref{thm:SuI} for concatenated $\bar{\cS}^{(\ell)}$ and by Theorem~\ref{thm:SuI-RR} for concatenated $\bar{\cS}|_{\alpha}^{(\ell)}$, we argue that the total length of the obtained
sequence is $m=O((k+n/\alpha)\polylog n)$. Indeed, the first part results from the telescoping sum for different $\ell$
and the second component is a logarithmic amplification of the original $O((n/\alpha)\polylog n)$ length of SuI's; all is amplified
by $O(\log n)$ due to enhancement of the used SuI's by Balanced IDs.
In all the components, every element belongs to $\cO(\polylog n)$ queries, by Theorems~\ref{thm:SuI} and~\ref{thm:SuI-RR},
in the final sequence it also occurs in  $\cO(\polylog n)$ queries.
\end{proof}

Combining Lemma~\ref{lem:smallK} with Lemma~\ref{lem:largeK} gives Theorem~\ref{thm:constructive-upper}.
Note that in both lemmas the decoding algorithm proceeds query-by-query, each time spending polylogarithmic time on each of them; additionally, for each decoded element, an update of the feedback of next queries needs to be done, which takes time proportional to the number of occurrences of the discovered element in the queries. Thus, it is asymptotically upper bounded by the length of the sequence plus $k$ times the upper bound on the number of occurrences of an element in the queries (polylogarithmic).

\section{Applications}
\label{sec:applications}

\paragraph{Group testing on multi-sets.}

Assume instead of a hidden set, there is a hidden multi-set $K$, containing at most $k$ elements from $[N]$.
Multi-set means that each element may have several multiplicities. Let $\kappa$ be the sum of multiplicities of elements in $K$,
and we assume it is unknown to the algorithm. 
We could decode all elements in $K$ with their multiplicities using similar approach as in Section~\ref{sec:poly},
with the following modifications. 

First, we need to have a sufficiently large cap $\alpha$ to decode each multiplicity, i.e., $\alpha$ should be not smaller than $\kappa$.

Second, in the construction Algorithm~\ref{alg:construction}, instead of applying SuI's only while $\ell> \frac{c_2l}{\alpha -1}$ (line 2) and then SSuI (line 10), we need to keep applying SuI's while $\ell> 1$ (line 2) and remove line 10.
Analogously in the structure of decoding Algorithm~\ref{alg:decoding} -- updating line 2 and removing the end starting from line 8.
This is because the multiplicities of elements not decoded by SuI's in the While-loop could still be larger than $\sqrt{\kappa}$,
therefore switching to SSuI may not be enough to decode their multipicities (note that $\kappa$ plays in this part a similar role to $k$ in the original algorithm for sets without multiplicities).
The correctness still holds, as each consecutive SuI combined with balanced IDs reveals full multiplicities of a fraction of remaining elements in $K$ (instead of just presence of elements, as in the original proof of Theorem~\ref{thm:constructive-upper}).
The asymptotic query complexity (codeword length) stays the same as the part coming from the While-loop (the sum of SuI lengths multiplied by $2\log n$ coming from balanced IDs), since we just add a negligible tail in the sum of lengths of SuI's considered in Theorem~\ref{thm:constructive-upper}. 

Third, polynomial time is now with respect to $n$ and $\log\kappa$, to deal with multiplicities.

Therefore we get:

\begin{theorem}
\label{thm:multi-set-upper}
There is an explicit polynomial-time (in $n$ and $\log\kappa$) algorithm constructing non-adaptive queries $Q_1,\ldots,Q_m$, for $m=O\left(\min\left\{n, k \polylog n\right\}\right)$, that correctly decode a multi-set $K$ of at most $k$ elements and multiplicity $\kappa$ (where $k$ is known but $\kappa$ could be unknown) under feedback 
$\cF_{\alpha}$ with polynomial-time (in $n$ and $\log\kappa$) decoding, where $\alpha$ is not smaller than the largest multiplicity of an element in $K$.
Moreover, every element occurs in $\cO(\polylog n)$ queries, and the decoding time is
$\cO(m+k\polylog n)$.
\end{theorem}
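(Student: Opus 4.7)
My plan is to follow the modification sketched just before the theorem: drop the final SSuI block of Algorithm~\ref{alg:construction} and extend the sequence of balanced-ID-augmented SuI's $\bar{\cS}^{(\ell)}$ all the way down to $\ell=1$, keeping the construction of each $\cS^{(\ell)}$ exactly as an $(n,\ell,1/2,k,\alpha-1)$-SuI (which treats only the \emph{support} of $K$, whose size is at most $k$). The decoding procedure is the corresponding truncation of Algorithm~\ref{alg:decoding}: only the while-loop over SuI layers, with no SSuI tail. Because $\kappa$ does not enter the construction, the algorithm does not need to know it.

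Next I would establish the inductive invariant that by the start of $\bar{\cS}^{(\ell)}$, at most $\ell$ distinct elements of $\mathrm{supp}(K)$ still have unknown multiplicity. The base case $\ell=k$ is immediate. For the inductive step, let $K_1$ be the undiscovered sub-multiset and $K_2=K\setminus K_1$ the one already decoded, so $|\mathrm{supp}(K_1)|\le\ell$ and $|\mathrm{supp}(K_2)|\le k$. Applying Theorem~\ref{thm:SuI} to the supports, there are at least $\ell/2$ elements $v\in\mathrm{supp}(K_1)$ admitting some $S\in\cS^{(\ell)}$ with $S\cap\mathrm{supp}(K_1)=\{v\}$ and $|S\cap\mathrm{supp}(K_2)|<\alpha-1$.

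The key step is to verify that $\cR(S)$ still recovers both $v$ and its multiplicity $m_v$. Using $\alpha\ge\kappa$ (as highlighted in the discussion preceding the theorem), the total multiplicity of any sub-multiset of $K$ is at most $\kappa\le\alpha$, so every feedback $\cF_\alpha(R_i(S)\cap K)$ equals the uncapped sum of multiplicities on the support intersection. Subtracting the corresponding quantity computed from the known $K_2$ produces a vector whose $i$-th coordinate is $m_v$ if $v\in R_i(S)$ and $0$ otherwise, i.e.\ exactly $m_v$ times the balanced ID of $v$; this determines $v$ (whose ID has $\log_2 n$ ones) and $m_v$ (the single non-zero entry) uniquely. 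As in Lemma~\ref{lem:smallK}, we accept only when the difference vector equals a positive integer multiple of a balanced ID, and discard otherwise; when $|S\cap\mathrm{supp}(K_1)|=0$ the vector is identically zero, and when $|S\cap\mathrm{supp}(K_1)|\ge 2$ the weighted sum of distinct balanced IDs either introduces a second positive value (overlapping 1-bits, or distinct $m_u$'s) or exceeds $\log_2 n$ non-zero positions, both violating the single-multiple-of-balanced-ID pattern. Iterating to $\ell=1$ therefore reveals every element of $K$ together with its multiplicity.

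Finally I would total the costs. The SuI lengths from $\ell=k$ down to $\ell=1$ telescope to $O(\min\{n,k\polylog n\})$ by Theorem~\ref{thm:SuI}, and the $2\log_2 n$ factor from balanced IDs is absorbed into $\polylog n$. Each element appears in $O(\polylog n)$ queries of each individual SuI, and summing over $O(\log n)$ layers plus the ID amplification preserves the $O(\polylog n)$ per-element count. Decoding sweeps the feedback vector once, performing $O(\log\kappa)$-bit arithmetic on multiplicities and, for each of the at most $k$ discovered elements, updating $O(\polylog n)$ subsequent feedbacks, yielding $O(m+k\polylog n)$ total time in the $\log\kappa$-bit RAM model. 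The main obstacle I anticipate is the spurious-detection step when $|S\cap\mathrm{supp}(K_1)|\ge 2$: with multiplicities the difference vector is a multiplicity-weighted sum of several balanced IDs, and ruling out the possibility that such a weighted sum accidentally equals $c$ times a single balanced ID requires a careful case analysis exploiting the rigid $0/1$ structure of balanced IDs and the cap of $\log_2 n$ ones per codeword.
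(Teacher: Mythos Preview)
Your proposal is correct and follows essentially the same approach as the paper: drop the SSuI tail, run the balanced-ID-augmented SuI layers all the way to $\ell=1$, and use the uncapped feedback (via $\alpha\ge\kappa$) to read off both the identity and the multiplicity of each newly isolated element. The obstacle you flag at the end is not actually delicate: if $\sum_{x\in S\cap\mathrm{supp}(K_1)} m_x\cdot\mathrm{ID}(x)=c\cdot\mathrm{ID}(z)$, then summing coordinates gives $c=\sum_x m_x$, so each of the $\log_2 n$ positions where $\mathrm{ID}(z)=1$ must attain the maximal value $\sum_x m_x$, forcing every $\mathrm{ID}(x)$ to be $1$ on exactly that set of $\log_2 n$ positions---hence all $x$'s coincide.
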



\dpj{
\paragraph{Maintaining and reconstructing a (multi) set.}
Consider the following problem. We have an incoming very large stream of insertion or deletions of elements from some domain $N$. The objective is to propose a datastructure that processes such operations and at any step (i.e., after processing a certain number of operations) it can answer a request and provide information about the set specified by the operations that have been processes so far. This is a commonly studied setting (see e.g.,~\cite{cormode2005s,IlyasAE04}) and extracting information from such stream of operations has applications to database systems. Our algorithms lead directly to an explicit formulation of a datastructure capable of extracting the whole (multi) set, however only conditioned that (at the moment of the request) the sum of multiplicities of the hidden set does not exceed $k$. The space complexity of the datastructure would equal to the number of queries of the algorithm, which is $O\left(\min\left\{n, k \polylog n\right\}\right)$.

}
\paragraph{Maintaining and reconstructing a graph with dynamically added or removed edges.}

Consider a graph $G$ with a fixed set of nodes and an online stream of operations on $G$, where a single operation could be either adding or removing an edge to/from $G$.
Assume for the ease of presentation that after each operation, the maximum node degree is bounded by some parameter~$k$.~\footnote{%
This assumption could be waved by using \gt codes for different parameters $k$, depending on the actual size of $G$, hence $k$ could play role of an average size of a neighborhood.}
Consider a sequence of queries from Theorem~\ref{thm:constructive-upper} on the set of all possible $\frac{n(n-1)}{2}$ edges.
For each added/removed edge, we increase/decrease (resp.) a counter associated with each query containing this edge.
As each edge occurs in $\cO(\frac{k}{\alpha}\log^2 n +\polylog n)$ queries, and thus this is an upper bound (up to some additional logarithmic factor) on the time of each {\em graph update},
which is {\em polylogarithmic} for $\alpha$ close to $k$.
Whenever one would like to recover the whole graph, a {\em reconstruction} algorithm is applied, which takes 
$\cO(m+\frac{k^2}{\alpha}\log^2 n +k\polylog n)$ steps,
which for $\alpha$ close to $k$ is $\cO(nk\polylog n)$. Note that the latter formula corresponds to (the upper bound on) the number of edges in $G$. To summarize, we implemented graph updates operations in $\polylog n$ time per (edge-)operation, and the whole graph recovery in time proportional to the graph size (number of edges) times $\polylog n$.

\dk{
\paragraph{Private Parallel Information Retrieval (PPIR)}
One of techniques to speed-up Information Retrieval from a large dataset is to employ autonomous agents searching parts of the datasets, c.f.,~\cite{GalanisWJD03}. Our Capped Quantitative Group Testing algorithms could be applied to achieve this goal, additionally providing a level of privacy. Assume that there are $m=O(k\polylog n)$ simple autonomous agents, where $m$ is the number of queries in our Capped QGT system. Each agent $i$ is capable to search only through records captured by the corresponding query set $Q_i$, and only count the number of occurrences of records satisfying the search criteria, but only up to $\sqrt{k}$. If all agents share privately their results with the user, he can decompose the set $K$ of at most $k$ elements satisfying the searching criteria, while each of the agents has knowledge about at most $\sqrt{k}$ of these elements. It follows from the construction of our queries that each of them is of size $O(n/\sqrt{k} \polylog n)$, which is worst-case number of records that a single agent needs to check -- thus equal to parallel time. Note also that agents perform very simple counting operations, thus the PPIR algorithm could be efficient in practice. 
}

%
\section{Constructions of combinatorial tools}
\label{sec:tools}
\subsection{Polynomial-time construction of Selectors-under-Interference}
\label{ssec:SuI}

In this section, we show how to construct, in time polynomial in~$n$, an $(n,\ell,\epsilon,\kappa,\alpha)$-SuI $\mathcal{S}$ of size
$\cO(\min\left\{n,\ell \polylog n\right\})$, 
for any integer parameters $\ell,\kappa\le n$, $\alpha\le \kappa$, and any (arbitrarily small) constant $\epsilon\in (0,1/2)$.
Let $\ell^*$ denote~$\ell \ep$.
The construction combines dispersers with strong selectors, see also the pseudocode Algorithm~\ref{alg:SuI}. 
We start from specifying those tools.

\subparagraph{Disperser.}
Consider a bipartite graph $G=(V,W,E)$, where $|V|=n$, 
which is an {\em $(\ell^*,d,\ep)$-disperser with entropy loss $\delta$}, 
i.e., it has left-degree $d$, 
$|W|=\Theta(\ell^* d/\delta)$, 
and satisfies the following dispersion condition:
for each $L\subseteq V$ such that $|L|\ge \ell^*$,
the set $N_G(L)$ of neighbors of $L$ in graph $G$ is of size at least $(1-\ep)|W|$.
%
Note that it is enough for us to take as $\epsilon$ in the dispersion property the same value as
in the constructed $(n,\ell,\epsilon,\kappa,\alpha)$-SuI $\mathcal{S}$.
An explicit construction (i.e., in time polynomial in~$n$) of dispersers was given by
Ta-Shma, Umans and Zuckerman~\cite{TUZ},
for any $n\ge \ell$, and some $\delta=\cO(\log^3 n)$,
where $d=\cO(\text{polylog }n)$. 

\subparagraph{Strong selector.}
Let $\cT=\{T_1,\ldots,T_m\}$ be  an explicit 
$(n,c\delta)$-strong-selector (also called strongly-selective family), 
for a sufficiently large constant $c>0$ that will be fixed later, 
of size \linebreak
$m=\cO(\min\{n,\delta^2\log^2 n\})$,
as constructed by Kautz and Singleton~\cite{kautz1964nonrandom}.

\subparagraph{Construction of $(n,\ell,\epsilon,\kappa,\alpha)$-SuI $\mathcal{S}$.}
We define an $(n,\ell,\epsilon,\kappa,\alpha)$-SuI $\mathcal{S}$
of size
$\min\{n,m|W|\}$, which consists of sets $S_i$, 
for $1\le i\le \min\{n,m|W|\}$.
There are two cases to consider, depending on the relation
between $n$ and $m|W|$.
The case of $n\le m|W|$ is simple: take the singleton 
containing only the $i$-th element of~$V$ as~$S_i$.
Consider a more interesting case when $n> m|W|$.
For $i=am+b\le m|W|$, where $a$ and $b$ are non-negative integers 
satisfying $a+b>0$, let $S_i$ contain all the nodes $v\in V$ 
such that $v$ is a neighbor of the $a$-th node in~$W$ and $v\in T_b$. 

\begin{algorithm}[h]
\KwData{$(\ell\epsilon,d,\ep)$-disperser $G = (V,W,E)$, $V = \{v_1,\dots,v_n\}$, $W = \{w_1,\dots,w_{|W|}\}$, $|W| = \Theta(\ell d/\delta)$, $\delta=\cO(\log^3 n)$, $d=\cO(\text{polylog }n)$, \\ \hspace{7mm} $(n,c\delta)$-strong-selector $\cT =\{T_1,\ldots,T_m\}$}
\KwResult{$(n,\ell,\epsilon,\kappa,\alpha)$-SuI $\mathcal{S}$}
\For{$i\leftarrow 1$ \KwTo $\min\{n,m|W|\}$}{
\If{$n > m|W|$}{
$S_i \leftarrow \{v_i\}$ 
}
\Else{
Find $a,b > 0$, such that $i=am+b\le m|W|$ \;
$S_i \leftarrow T_b \cap N_G(w_a)$ 
}
}
\KwRet{$\langle S_1,S_2,\dots,S_{\min\{n,m|W|\}}\rangle$}
\caption{Construction of Selectors-under-Interference (SuI).
\label{alg:SuI}}
\end{algorithm}

\remove{
Strongly selective families are $(n,k,k)$-selectors.
We show how to use dispersers to decrease the third
parameter~$r$ in $(n,k,r)$-selectors while also gracefully 
decreasing the size of the family of sets.

If $r\le 3k/4$ then we can use the construction of an
$(n,k,3k/4)$-selector given by Indyk~\cite{Ind}.
Assume that $r>3k/4$.
Let $0<\ep<1/2$ be a constant.
}

\remove{
\begin{theorem}
\label{thm:SuI}
The family 
$\cS$ is an $(n,\ell,\epsilon,\kappa,\alpha)$-SuI, 
for any $\ell$, any $\alpha\le k$ such that $\alpha\ell > c_2 \kappa$ for a sufficiently large constant $c_2$, and any constant $\epsilon\in (0,1/2)$, of size 
$\cO(\min\left\{n,\ell \polylog n\right\})$.
Moreover, every element occurs in $\cO(\polylog n)$ queries.
\end{theorem}
}

\begin{proof}[Proof of Theorem~\ref{thm:SuI}]
First we show that 
the constructed $\cS$ is an $(n,\ell,\epsilon,\kappa,\alpha)$-SuI.
The case $n\le m|W|$ is clear, since each element in a set~$K_1$ of
size at most~$\ell$ occurs as a singleton in some set~$S_i$ (here it does not matter what the set $K_2$ is).

Consider the case $n>m|W|$.
Let a set $K_1\subseteq V$ be of size at most~$\ell$ and a set $K_2$ of at most $\kappa$ elements.
Suppose, to the contrary, that there is a set $L\subseteq K_1$ of 
size $\ell^*$ such that none among the elements in~$L$
is $K_1$-selected by $\cS$ under $\alpha$-interference from $K_2$, 
that is, $S_i\cap L\ne \{v\}$ or $|S_i\cap K_2|\ge \alpha$, for any $v\in L$ and $1\le i\le m|W|$.
(Recall that $\ell^* = \ell \ep$.)
%

\noindent
\textsf{Claim:}
Every $w\in N_G(L)$ has more than 
$c\delta$ 
neighbors in~$K_1$
or at least $\alpha$ neighbors in $K_2$. 

The proof is by contradiction.
Suppose, to the contrary, that
there is $w\in N_G(L)$ which has
at most 
$c\delta$ 
neighbors in~$K_1$ and less than $\alpha$ neighbors in $K_2$,
that is, $|N_G(w)\cap K_1|\le c\delta$ and 
$|N_G(w)\cap K_2|<\alpha$. 
By the former property and the fact that $\cT$ is an 
$(n,c\delta)$-strong-selector, 
we get that, for every $v\in N_G(w)\cap K_1$, the equalities
\[
S_{w\cdot m+b}\cap K_1 = 
(T_b\cap N_G(w)) \cap K_1 =
T_b\cap(N_G(w)\cap K_1) =
\{v\} 
\]
hold, for some $1\le b\le m$.
This holds in particular for every $v\in L\cap N_G(w)\cap K_1$.
There is at least one such~$v\in L\cap N_G(w)\cap K_1$ 
because set $L\cap N_G(w)\cap K_1$ is nonempty 
since $w\in N_G(L)$ and $L\subseteq K_1$.
Additionally, recall that $N_G(w)\cap K_2$ is smaller than $\alpha$.
The existence of such~$v$ is in contradiction with the choice of~$L$.
Namely, $L$ contains only elements which are not $K_1$-selected
by sets from $\cS$ under $\alpha$-interference from $K_2$,
but $v\in L\cap N_G(w)\cap K_1$ 
is selected from $K_1$ by some set 
$S_{w\cdot m+b}$ and the interference from $K_2$ on this set is smaller than $\alpha$. 
This makes the proof of Claim complete. $\blacksquare$

Recall that $|L|=\ell^*=\ell \ep$.
By dispersion, the set $N_G(L)$ is of size larger than $(1-\ep)|W|$.
Consider two cases below -- they cover all possible cases because of the above Claim.

\noindent
\textsf{Case 1:} At least half of the nodes $w$ in $N_G(L)$
have more than $c\delta$ neighbors in~$K_1$.

In this case, the total number of edges 
between the nodes in $K_1$ and 
$N_G(L)$ in graph $G$ is larger than 
\[
\frac{1}{2}(1-\ep)|W|\cdot c\delta =
\frac{1}{2}(1-\ep)\Theta(\ell d/\delta) \cdot c\delta > 
\ell d \ ,
\]
for a sufficiently large constant~$c$. 
This is a contradiction, since the total number of edges in graph $G$ 
incident to nodes in~$K_1$ is at most $|K_1|d= \ell d$. 

\noindent
\textsf{Case 2:} More than half of the nodes $w$ in $N_G(L)$
have at least $\alpha$ neighbors in $K_2$.

In this case, the total number of edges 
between the nodes in $K_2$ and 
$N_G(L)$ in graph $G$ is larger than 
\[
\frac{1}{2}(1-\ep)|W|\cdot \alpha =
\frac{1}{2}(1-\ep)\Theta(\ell 
d/\delta) \cdot \alpha >
\kappa d  \ ,
\]
for a sufficiently large constant~$c_2$.
This is a contradiction, since the total number of edges in graph $G$ 
incident to nodes in~$K_2$ is at most $|K_2|d= \kappa d$. 

Thus, it follows from the contradictions in both cases that $\cS$ is an $(n,\ell,\epsilon,\kappa,\alpha)$-SuI.

The size of this selector is
\begin{eqnarray*}
\min\{n,m|W|\}
&=&
\cO\left(\min\left\{n,\delta^2\log^2 n \cdot \ell^*d/\delta\right\}\right) 
\\
&=&
\cO\left(\min\left\{n,\ell^*\delta d\log^2 n \right\}\right) \\
&=&
\cO\left(\min\left\{n,\ell\polylog n\right\}\right) \ ,
\end{eqnarray*}
since $d=\cO(\polylog n)$, $\delta=\cO(\log^3 n)$ and $\ell=\Theta(\ell^*)$.
It follows directly from the construction that each element is in $\cO(d\delta^2\log^2 n)=\cO(\polylog n)$ queries.
\end{proof}

\paragraph*{Sparser SuI for small $\ell$ compared to $\kappa/\alpha$}

What if $\alpha\ell \le c_2 \kappa$ for some constant $c_2>0$?
We could modify the above construction as follows.
Let $\gamma=\frac{\alpha\ell}{\kappa}$.
If $\gamma\le c_2$,  we take 
the $(n,c_2\kappa/\alpha,\epsilon,\kappa,\alpha)$-SuI $\cS$ from Theorem~\ref{thm:SuI} and partition each set $S_i\in \cS$ into the smallest number of sets of size at most $\alpha$ each. Note that the total number of occurrences of elements in sets $S_i\in\cS$ in the above construction is upper bounded by the number of edges in the disperser multiplied by the number of occurrences of elements in the strong selector,
which asymptotically gives $O(nd\cdot \delta\log n)=O(n \polylog n)$.
Therefore, after the above partition of sets $S_i$, the total number of sets in the obtained sequence is 
$O\left(\min\left\{n,(\kappa/\alpha) \polylog n\right\}+\frac{n}{\alpha}\polylog n\right)$. We denote the new sequence obtained from $\cS$ by
$\cS|_{\alpha}$
Note that it is an $(n,\ell,\epsilon,\kappa,\alpha)$-SuI, as if in the original $(n,c_2\kappa/\alpha,\epsilon,\kappa,\alpha)$-SuI $\cS$ an element $v$ was $\alpha$-selected from a  set $K_1$ under interference from $K_2$, where $|K_1|\le \ell \le c_2\kappa/\alpha$, it occurs in some of the new sets being in the partition of the original selecting set, and by monotonicity of selection -- $v$ is also $\alpha$ selected from $K_1$ under interference from $K_2$. 
Note that in the above decoding, the number of queries containing any element remains $\cO(\polylog n)$ as in original SuI from Theorem~\ref{thm:SuI}.
Hence, by taking the construction of family 
$\cS|_{\alpha}$, we proved Theorem~\ref{thm:SuI-RR}.
\remove{
\begin{theorem}
\label{thm:SuI-RR}
The family 
$\cS|_{\alpha}$ is an $(n,\ell,\epsilon,\kappa,\alpha)$-SuI, 
for any $\ell$, any $\alpha\le \ell$ and $\ell\le c_2\kappa/\alpha$ 
for a sufficiently large constant $c_2$, 
and any constant $\epsilon\in (0,1/2)$, of size 
$O\left(\min\left\{n,(\kappa/\alpha) \polylog n\right\}\right.$ 
$\left. +\frac{n}{\alpha}\polylog n\right)$.
Moreover, every element occurs in $\cO(\polylog n)$ queries.
\end{theorem}
}

%
\subsection{Polynomial-time construction of Strong-Selectors-under-Interference}
\label{ssec:SSuI}

In order to construct an $(n,\ell,\kappa,\alpha)$-SSuI  $\mathcal{S}$,
we 
%
use the following variation of a Reed-Solomon superimposed code, analogous to the construction used in~\cite{kautz1964nonrandom}, however here we prove an additional property of these objects.
\begin{enumerate}
\item
Let 
$d= \left \lceil \log_\ell n \right \rceil$ and $q=c \cdot \ell \cdot d$ 
for some constant $c>0$ such that $q^{d+1} \ge n$ and $q$~is~prime.

\item
Consider all polynomials 
of degree $d$ over field $\fieldF_q$; there are $q^{d+1}$ such polynomials. 
Remove $q^{d+1}-n$ arbitrary polynomials and denote the remaining polynomials by $P_1,P_2,\dots,P_n$.
\label{step:remove}

\item
Create the following matrix $M$ of size 
$q \times n$.
Each column $i$, for $1\le i\le n$, stores values $P_i(x)$ of polynomial $P_i$ for arguments $x = 0, 1, \dots, q-1$;
the arguments correspond to rows of $M$. 
Next, matrix $M^*$ is created from $M$ as follows: each value $y=P_i(x)\in\{0,1,\ldots,q-1\}$ is represented 
and padded in $q$ consecutive rows containing $0$s and $1$s, where $1$ is exactly in $y+1$-st padded row, while in all other padded rows there are $0$s. Notice that each column of $M^*$ has $q^2$ rows ($q$ rows per each argument),
therefore $M^*$ is of size $q^2\times n$.

\item
Set $T_i\subseteq [n]$, for $1\le i\le q^2$, is defined based on row $i$ of matrix $M^*$: it contains all elements $v\in [n]$ such that $M^*[j,v]=1$. (Recall that each such $v$ corresponds to some polynomial.)
For a fixed constant $c > 0$, $\{T_i\}_{i=1}^{q^2}$ forms a family $\mathcal{T}^{(c)}$ of subsets of set $\{1,\ldots,n\}$. 
\end{enumerate}

The above construction could be presented as a simplified pseudocode as follows:
\begin{algorithm}[h]
$d \leftarrow \left \lceil \log_\ell n \right \rceil$\;
$q \leftarrow  c \cdot \ell \cdot d$ for some constant $c>0$ such that $q^{d+1} \ge n$ and $q$~is~prime\;
\tcc{There are $q^{d+1} \ge n$ such polynomials.}
$P_1,P_2,\dots,P_n \leftarrow $ arbitrary $n$ polynomials of degree $d$ over field $\fieldF_q$\;
$\mathcal{T}^{(c)} \leftarrow $ sequence of $q^2$ empty sets $\{T_i\}_{i=1}^{q^2}$\;
\For{$i\leftarrow 1$ \KwTo {$n$}}{
\For{$x\leftarrow 0$ \KwTo {$q - 1$}}{
\tcc{Value of $i$-th polynomial for argument $x$.}
	$value \leftarrow P_i(x)$\;
\tcc{Encode $value$ in unary on positions $x\cdot q + 1, x\cdot q+2,\dots (x+1)\cdot q$}
	$index \leftarrow x \cdot q + value + 1$\;
	\tcc{Add element $i$ to the corresponding set $T$.}
	$T_{index}$\textbf{.add}$(i)$\
}
}
\KwRet{$\mathcal{T}^{(c)}$}
\caption{Construction of Strong-Selectors-under-Interference}
\end{algorithm}

\remove{
\begin{theorem}
\label{thm:SSuI}
$\mathcal{T}^{(c)}$ is an $(n,\ell,\kappa,\alpha)$-SSuI of length $O(\ell^2 \log_\ell^2 n)$ constructed in polynomial time,
provided $\ell\ge c_2 \kappa/\alpha$ for some constant $c_2$ and for a sufficiently large constant $c>0$. 
Moreover, every element occurs in $\cO(\ell\log_\ell n)$ queries.
\end{theorem}
}
\begin{proof}[Proof of Theorem~\ref{thm:SSuI}]
Consider the constructed family $\mathcal{T}^{(c)}$.
Polynomial time of this construction follows directly from the fact that the space of polynomials over field $[q]$ is of polynomial size in $n$
and all the operations on them are polynomial. The length follows from the fact that it is $q^2=O(\ell^2 \log_\ell^2 n)$.

Recall that each element $v\in [n]$ correspond to some polynomial of degree at most $d$ over $\fieldF_q$.
Note that two polynomials $P_i$ and $P_j$ of degree $d$ with $i \neq j$, can have equal values for at most $d$ different arguments. This is because they have equal values for arguments $x$ for which $P_i(x) - P_j(x)=0$. However, $P_i - P_j$ is a polynomial of degree at most $d$, so it can have at most $d$ zeroes. Hence, $P_i(x) = P_j(x)$ for at most $d$ different arguments~$x$.

Take any polynomial $P_i$ and any other at most $\ell-1$ polynomials $P_j$, which altogether form set $K_1$ of at most $\ell$ 
polynomials.
There are at most $(\ell-1) \cdot d$ different arguments where one of the other $\ell-1$ polynomials can be equal to $P_i$. 
Hence, for at least $q-(\ell-1) \cdot d$ different arguments, the values of the polynomial $P_i$ are different than the values of the other 
polynomials in $K_1$. Let us call the set of these arguments $A$.

Consider any set $K_2\subseteq [n]$ of at most $\kappa$ elements (corresponding to polynomials).
Consider arguments from set $A$ for which $P_i$ has the same value as at least $\alpha$ other polynomials in $K_2$.
The number of such arguments is at most 
\[
\frac{\kappa\cdot d}{\alpha} \le
(\ell/c_2) \cdot d
< 
(c-1)\ell \cdot d
< q - (\ell-1)\cdot d
\ ,
\]
which means it is smaller than $|A|$ for sufficiently large constant $c>0$ in the definition of $q=c\ell \cdot d$.
Therefore, there is an argument (in set $A$) such that the value of $P_i$ is different from the values of all other $\ell-1$ 
polynomials in $K_1$ {\em and} less than $\alpha$ polynomials in set $K_2$. 
As this holds for an arbitrary polynomial $P_i$ in an arbitrary set $K_1$ of at most $\ell$ polynomials (in total)
and an arbitrary set $K_2$ of at most $\kappa$ polynomials, $\cT^{(c)}$ is an $(n,\ell,\kappa,\alpha)$-SSuI.
\remove{
Therefore, if we look at rows with $1$ in column $i$ of matrix $M$ (there are $q$ of those rows, one for each argument), at least $q-k \cdot d$ of them have $0$s in chosen $k$ columns. Since there are $q^2$ rows, so a fraction $(q-k \cdot d)/q^2$ of rows have the desired property (i.e., there is value $1$ in column $i$ and value $0$ in the chosen $k$ columns):

$$\dfrac{q-k \cdot d}{q^2} = \dfrac{(c-1) \cdot k \cdot d}{(c \cdot k \cdot d)^2} = \dfrac{c-1}{c^2 \cdot k \cdot d} \triangleq f(c) \ . $$

Let us find the value of $c$ that maximizes the function $f$. To do it, we compute its differential 

$$f'(c) = (\dfrac{c-1}{c^2 \cdot k \cdot d})' = \dfrac{1\cdot(c^2 \cdot k \cdot d)-(c-1)\cdot k \cdot d \cdot 2c}{c^4 \cdot k^2 \cdot d^2} =$$ 
$$= \dfrac{-c^2 \cdot k \cdot d + 2c \cdot k \cdot d}{c^4 \cdot k^2 \cdot d^2} = \dfrac{-c + 2}{c^3 \cdot k \cdot d} \ .$$ 

Thus, $f'(c)=0$ for $c=0$ or $c=2$. The value $c=2$ maximizes $f$, giving 
$f(c)\le f(2) = 1/(4k \cdot d) = 1/(4k \cdot \log_k n)$.

Therefore, we can construct a $(n,k,\epsilon)$-universally-strong selector with $\epsilon = f(2) \cdot k = 1/(4d) = 1/(4 \log_k n)$ of length $4k^2 \cdot \left\lceil\log_k n \right\rceil^2$ (which means that an $f(2) = 1/(4k \cdot \log_k n)$ fraction of the selector's sets have the desired property).}
Finally, it follows directly from the construction that every element occurs in $q=\cO(\ell\log_\ell n)$ queries.
\end{proof}

%
\section{Lower bound}
\label{s:lower}
\begin{proof}[Proof of Theorem~\ref{thm:lower}]
We will first show the $\min\left\{\frac{n}{\alpha},\frac{k^2}{\alpha^2} \right\}$ component. Assume that a sequence of queries $Q_1,Q_2,\dots,Q_t$ of length $t$ solves \gt. We want to show the lower bound that holds for any feedback function capped at $\alpha$ hence we assume that the feedback function $\mathcal{F}$ returns the whole set (i.e., the identifiers of all the elements). Recall that $\mathcal{F}$ works only for sets with at most $\alpha$ elements. 
We begin by proving the following:
\\
\textsf{Claim A:} For any set $K$, with $|K| \leq k$ and any $x \in K$, there must exist $\tau \in \{1,2,\dots, t\}$, such that $x \in Q_{\tau}$ and $|K \cap Q_{\tau}|\leq \alpha + 1$.

The proof is by contradiction. Assume that such a set $K^*$ and element $x^*$ exist for which there is no such query. Consider feedback vectors for sets $K^*$ and $K^* \setminus \{x^*\}$. For any query that does not contain $x^*$, the feedback is clearly identical. For any query $Q_{\tau}$, such that $x^*\in Q_{\tau}$, we have $|Q_{\tau} \cap K^*|\geq \alpha + 2$ and $|Q_{\tau} \cap (K^*\setminus \{x^*\})|\geq \alpha + 1$ and 
sets $K^*$ and $K^*\setminus \{x^*\}$ are indistinguishable under any feedback capped at $\alpha$ hence the sequence of queries does not solve the problem. This completes the proof of Claim A. \ $\blacksquare$
 
 Take all queries that have at most $\alpha +1$ elements and all elements that belong to such queries.
 We have:
 $
 N_{s} = \bigcup_{\tau \in \{1,2,\dots,t\} \\ |Q_{\tau}| \leq \alpha +1} Q_{\tau} \ .
 $ 
Denote the remaining elements by $N_{l} = N \setminus N_{s}$. We will consider two cases:
\\
\textsf{Case $1$:} $|N_{s}| \geq n/2$

Observe that: 
$t \geq |\{\tau \in \{1,2,\dots,t\} : |Q_{\tau}| \leq \alpha + 1\}| \geq \frac{N_{s}}{\alpha+1} \geq \frac{n}{2(\alpha+1)}$. 

\noindent \textsf{Case $2$:} $|N_{l}| \geq n/2$

 In this case, we take an arbitrary subset $K_1$ of $k/2$ elements from $N_{l}$. For every element $x \in K_1$, we consider a set of queries $Q(x) = \{Q_{\tau} \in \{Q_1,Q_2,\dots,Q_t\}:  x \in Q_{\tau}, |Q_{\tau} \cap K_1| \leq \alpha +1\}$. We first show the following:
 \\
 \textsf{Claim B:} For every $x \in K_1$, we have $|Q(x)| \geq \frac{k}{2(\alpha+2)}$.
 
The proof is by contradiction. Assume that for some $x^* \in K_1$ we have $|Q(x^*)| < \frac{k}{2(\alpha + 2)}$. Then, for every query $Q \in Q(x^*)$, we take $\alpha + 2 - |Q \cap K_1|$ elements from $Q\setminus K_1$. Such elements exist since $|Q| \geq \alpha + 2$. Choose such elements for each query in $Q(x^*)$ and gather them in set $K_2$. Note that since $|Q(x^*)| < \frac{k}{2(\alpha+2)}$, then $|K_2| \leq k/2$. Now observe that set $K_1 \cup K_2$ and element $x^*$ violate Claim A. The obtained contradiction completes the proof of Claim  B. \ $\blacksquare$

 Now observe that each query belongs to at most $\alpha+1$ sets $Q(x)$ for different values of $x\in K_1$. Thus:
 $t \geq \frac{\sum_{x\in K_1} |Q(x)|}{\alpha+1} \geq \frac{k^2}{4(\alpha+1)(\alpha+2)}$. 
 
To complete the proof observe that any algorithm must fall either into Case 1 or Case 2, hence any algorithm needs to use $\Omega\left(\min\left\{\frac{n}{\alpha},\frac{k^2}{\alpha^2} \right\}\right)$ queries.

To see that any algorithm in $\cF_\alpha$ feedback model at least $k\frac{\log\frac{n}{k}}{\log \alpha}$ queries, observe that the feedback vector must be unique for each set $K$ with at most $k$ elements. Hence we need at least ${n\choose k}$ different feedback vectors for different sets. Feedback has at most $\alpha$ values hence we get $\alpha^t \geq {n \choose k}$ and $t \in \Omega(k\frac{\log\frac{n}{k}}{\log \alpha})$. 
\end{proof}

\section{Existential upper bound}
\label{s:upper}

\begin{proof}[Proof of Theorem~\ref{thm:existential}]

Assume that $\alpha >2 \log_2 n$, the opposite case will be considered at the end of the proof. We will prove using the probabilistic method that a $\alpha$-Round-Robin sequence of queries of length $t = O((n/\alpha + k)\log n)$ exists. We take $t_1 = \left\lceil\left(\frac{8n}{\alpha}\right)(\ln(ne)+ 4)\right\rceil$, $t_2 = \left\lceil k (\ln(ne)+ 4)\right\rceil$, $t = t_1 + t_2$ and construct a sequence of queries $\mathcal{Q} = \langle Q_1,Q_2,\dots, Q_t\rangle$ as follows. For $i\in[1,t_1]$, Each query $Q_i$ is constructed by including each element from $N$ independently at random with probability $p = \frac{\alpha}{6 n}$. For $i \in [t_1+1,t_1 + t_2]$, each query $Q_i$ is constructed by including each element from $N$ independently at random with probability $p = \min\{\frac{1}{6k}, \frac{\alpha}{6 n}\}$. Denote the first $t_1$ queries by $\mathcal{Q}_1$ and the remaining queries by $\mathcal{Q}_2$.

\noindent\textsf{Claim 1:}  With probability at least $2/3$ each query in $\mathcal{Q}$ has at most $\alpha$ elements.

Take any query $Q \in \mathcal{Q}$ and observe that the size of the query is a sum of Bernoulli trials and $E{|Q|} \leq \alpha / 6$. Using Chernoff bound~\cite{SURV}, since $\alpha > 6E{|Q|}$ we get:
$
\Pr{|Q| \geq \alpha} \leq 2^{-\alpha} \leq \frac{1}{n^2} \ .
$ 
Hence, knowing that $\alpha > 2\log_2 n$ the probability that any query is larger than $\alpha$ is by the union bound at most $t/n^2 <1/3$.

\noindent\textsf{Claim 2:} With probability at least $3/4$, for any set $K$, with $|K| \leq k$, for $k \leq n/\alpha$, some query $Q \in \mathcal{Q}_1$, satisfies $|Q \cap K| = 1$.

Consider any query $Q \in \mathcal{Q}_1$ and set $K$. Let $k^* = |K|$. We know that $k^* \leq n / \alpha$. 
We have:
\[
\Pr{|Q \cap K| = 1} = k^* \cdot \frac{\alpha}{6n} \cdot \left(1-\frac{\alpha}{6n}\right)^{k^*-1} \geq  \frac{k^*\alpha}{6n} \cdot \left(1-\frac{k^*\alpha}{6n}\right) \geq \frac{k^* \alpha}{8n} \ .
\]
Hence if $k^* \in [2^i,2^{i+1}]$, then $\Pr{|Q \cap K| = 1}  \geq \frac{2^i \alpha}{8n}$. We want to union bound the probability that the sequence fails to select some element from set $K$ over all possible sets $K$. We denote event $\mathsf{fail}$ as the event that $\mathcal{Q}_1$ fails to hit any set with a most $k$ elements. The possible number of sets of $K$ with $k^* \in [2^i,2^{i+1}]$ is at most $2^{i+1} {n \choose 2^{i+1}}$. Thus:
\[
\Pr{\mathsf{fail}} \leq \sum_{i=0}^{\log k} 2^{i+1} {n \choose 2^{i+1}} \left(1 - \frac{2^i \alpha}{8n}\right)^{t_1} \leq \sum_{i=0}^{\log k} e^{i+1} \cdot e^{2^i \ln \frac{ne}{2^i}} e^{-t_1 \cdot \frac{2^i \alpha}{8n }} \ .
\]
Knowing that $t_1 > \frac{8n}{\alpha}(\ln(ne)+ 4)$, we have for any $i\geq 0$,
$
t_1 > \frac{8n}{\alpha}\left(\ln\frac{ne}{2^i} + \frac{2\cdot (i+2)}{2^i}\right) \ .
$
Hence our probability of failure can be upper bounded by:
\[
\Pr{\mathsf{fail}} \leq \sum_{i=0}^{\log k}e^{-i-2} \leq \frac{1}{e^2} \frac{1}{1-1/e}\leq \frac{1}{4} \ .
\]
\noindent\textsf{Claim 3:} With probability at least $3/4$, for any set $K$, with $|K| \leq k$, for $k > n/\alpha$, some query $Q \in \mathcal{Q}_2$ satisfies $|Q \cap K| = 1$.

Similarly as in claim $2$ take any query $Q  \in \mathcal{Q}_2$ and set $K$. Let $k^* = |K|$. We know that $k^* \leq k$. 
We have:
\[
\Pr{|Q \cap K| = 1} = k^* \cdot \frac{1}{6k} \cdot \left(1-\frac{1}{6k}\right)^{k^*-1} \geq  \frac{k^*}{6k} \cdot \left(1-\frac{k^*}{6k}\right) \geq \frac{k^*}{8k} \ .
\]
Hence, if $k^* \in [2^i,2^{i+1}]$, then $\Pr{|Q \cap K| = 1}  \geq \frac{2^i}{8k}$. We denote event $\mathsf{fail}$ as the event that $\mathcal{Q}_2$ fails to hit any set with a most $k$ elements for $k > n/\alpha$:
\[
\Pr{\mathsf{fail}} \leq \sum_{i=0}^{\log k} 2^{i+1} {n \choose 2^{i+1}} \left(1 - \frac{2^i}{8k}\right)^{t_2} \leq \sum_{i=0}^{\log k} e^{i+1} \cdot e^{2^i \ln \frac{ne}{2^i}} e^{-t_2 \cdot \frac{2^i }{8k}} \ .
\]
Knowing that $t_2> 8k(\ln(ne)+ 4)$, we have for any $i\geq 0$: $t_2 > 8k\left(\ln\frac{ne}{2^i} + \frac{2\cdot (i+2)}{2^i}\right)$.
Hence our probability of failure can be upper bounded by:
\[
\Pr{\mathsf{fail}} \leq \sum_{i=0}^{\log k}e^{-i-2} \leq \frac{1}{e^2} \frac{1}{1-1/e}\leq \frac{1}{4} \ .
\]

The probability that any claim fails is at most $1/3 + 1/4 + 1/4 < 1$. By the probabilistic method we have that a sequence satisfying all three claims exist. Now if we want to distinguish $K_1$ from $K_2$ we take $K = K_1 \symdiff K_2$ and observe that by Claim 2 and 3, some query $Q$ has intersection of size exactly $1$ with $K_1 \symdiff K_2$. By Claim 1, each query has at most $\alpha$ elements hence feedback from query $Q$ under $\cF_\alpha$ will be different for $K_1$ and $K_2$. Hence, $\mathcal{Q}$ provides different feedbacks for any two sets of at most $k$ elements.

If $t < n$, then surely $\alpha \geq 2 \log_2 n$ and we use the sequence of queries $\mathcal{Q}$. Otherwise we simply pick a Round-Robin selector of size $n$, where each query contains one unique element. Hence the final query complexity is $\min\{t,n\}$.
\end{proof}


\section{Discussion of results and open directions}
\label{sec:future}
Considering only polynomially-constructible query systems leaves some interesting open directions. One such open direction is whether optimal-length query sequence can be constructed in polynomial time or perhaps it is possible to show some reduction that constructing a close-to-minimum query sequence is hard (even if we know that it exists). Shrinking polylogarithmic gaps between lower and upper bounds (existential) is another challenging direction, as well as considering other interesting classes of feedback with an $\alpha$-capped feedback, e.g., parity. We also believe that with some adjustment, \gt codes could be applied to efficiently solve many open problems in online streaming and graph learning fields.

\bibliographystyle{abbrv}
\bibliography{biblio} 





\end{document}